\newtheorem{proposition}{Proposition}
\newtheorem{remark}{Remark}
\newcommand\T{\rule{0pt}{2.1ex}}
\newcommand\B{\rule[-0.7ex]{0pt}{0pt}}
\journal{Journal of Information Processing and Management }
\begin{document}

\begin{frontmatter}

\title{A Two Phase Investment Game for Competitive Opinion Dynamics in Social Networks}
\tnotetext[mytitlenote]{A previous, preliminary, concise  version  of  this  paper  was presented at The Joint International Workshop on Social Influence Analysis and Mining Actionable Insights from Social Networks (workshop with IJCAI-ECAI), Stockholm, Sweden, 2018 \cite{dhamal2018manipulating}.}

\author[]{Swapnil Dhamal\fnref{INRIA,TSP,LIA}\corref{mycorrespondingauthor}}

\cortext[mycorrespondingauthor]{This is to indicate the corresponding author.}
\ead{swapnil.dhamal@gmail.com}

\author{Walid Ben-Ameur\fnref{TSP}}

\author{Tijani Chahed\fnref{TSP}}

\author{Eitan Altman\fnref{INRIA,LIA}}

\fntext[INRIA]{Institut National de Recherche en Informatique et en Automatique, Sophia Antipolis-M\'editerran\'ee, Valbonne 06902, France}
\fntext[TSP]{T\'el\'ecom SudParis, CNRS, Evry 91011, France}
\fntext[LIA]{Laboratoire Informatique d'Avignon, Avignon 84140, France}

\begin{abstract}
We propose a setting for two-phase opinion dynamics in social networks, where a node's final opinion in the first phase acts as its initial biased opinion in the second phase. In this setting, we study the problem of two camps aiming to maximize adoption of their respective opinions, by strategically investing on nodes in the two phases. A node's initial opinion in the second phase naturally plays a key role in determining the final opinion of that node, and hence also of other nodes in the network due to its influence on them. However, more importantly, this bias also determines the effectiveness of a camp's investment on that node in the second phase. In order to formalize this two-phase investment setting, we propose an extension of Friedkin-Johnsen model, and hence formulate the utility functions of the camps. We arrive at a decision parameter which can be interpreted as two-phase Katz centrality. There is a natural tradeoff while splitting the available budget between the two phases. A lower investment in the first phase results in worse initial biases in the network for the second phase. On the other hand, a higher investment in the first phase spares a lower available budget for the second phase, resulting in an inability to fully harness the influenced biases. We first analyze the non-competitive case where only one camp invests, for which we present a polynomial time algorithm for determining an optimal way to split the camp's budget between the two phases. We then analyze the case of competing camps, where we show the existence of Nash equilibrium and that it can be computed in polynomial time under reasonable assumptions. We conclude our study with simulations on real-world network datasets, in order to quantify the effects of the initial biases and the weightage attributed by nodes to their initial biases, as well as that of a camp deviating from its equilibrium strategy. Our main conclusion is that, if nodes attribute high weightage to their initial biases, it is advantageous to have a high investment in the first phase, so as to effectively influence the biases to be harnessed in the second phase.
\end{abstract}

\begin{keyword}
Social networks \sep
opinion dynamics  \sep
two phases \sep
zero-sum games \sep
Nash equilibrium \sep
Katz centrality
\end{keyword}

\end{frontmatter}


\section{Introduction
}
\label{sec:ODSNmultiphase_intro}

Studying opinion dynamics in a society is important
to understand and influence 
elections, viral marketing, propagation of ideas and behaviors, etc. 
Social networks play a prime role in determining the opinions of  constituent nodes, since nodes usually update their opinions based on the opinions of their connections \cite{easley2010networks,acemoglu2011opinion}. 
This fact is exploited by camps,
  intending to influence the opinions of these nodes in their favor.
A camp could be, for instance, seeker of funds or votes for a particular cause.
In this paper, we consider two  camps who aim to maximize the adoption of their respective opinions in a social network. 
We consider a strict competition in the space of real-valued opinions, where 
one camp aims to drive the overall opinion of the network towards being positive while the other camp aims to drive it towards  negative;
we refer to them as good and bad camps respectively.
(The terminologies `good' and `bad' camps are used only for  the ease of interpretation that one camp holds a positive opinion while the other camp holds a negative opinion on the real number line; it does not necessarily imply that one camp is more virtuous than the other.)
We consider a well-accepted quantification of 
the overall opinion of a network:
the average or equivalently, the sum of opinion values of the nodes in the social network \cite{gionis2013opinion,grabisch2017strategic,dhamal2018framework}.
Hence, the good and bad camps simultaneously aim to
 maximize and minimize this sum, respectively.

The average or sum of  opinion values is well suited to several applications. 
For instance,
in a fund collection scenario, the magnitude of the opinion value of a node can be interpreted as the amount of funds it is willing to contribute, and its sign would imply the camp towards which it is willing to contribute. 
Here, the objective of the good camp would be to drive the sum of opinion values of the nodes to be as high as possible, so as to
gather maximum funds 
for its concerned cause.
On the other hand,
the objective of
the bad camp would be to drive this sum to be as low as possible, that is, to
convince the population to contribute for an opposing cause.
In  such scenarios, the opinion value of a node 
would be a real-valued number, and the overall opinion of the network is well indicated by the sum or average of the opinion values of its constituent nodes.
While  fund collection  is a particular scenario where a node's opinion is explicitly expressed in the form of its contribution, the underlying principle, in general, applies to  scenarios that involve nodes having certain belief or information. 


In the literature on opinion dynamics in social networks,
there have been efforts to develop models which could determine how the individuals update their opinions based on the opinions of their connections
\cite{acemoglu2011opinion}.
With such an underlying model of opinion dynamics, we consider that each camp  aims to
 maximize the adoption of its opinion in the social network, while accounting for the presence of the competing camp.
 %
 %
 %
A camp could hence act on achieving its objective by strategically investing on selected influential individuals in a social network who could adopt its opinion. This investment could be in the form of money, free products or discounts, attention, convincing discussions, etc. 
Thus given a budget constraint, the strategy of a camp comprises  how much to invest on each node, in the presence of a competing camp who also  invests strategically.
This results in a game, and we are essentially interested in determining the equilibrium strategies of the two camps, from which neither camp would want to unilaterally deviate. Hence, our focus in this paper will be on determining the Nash equilibrium of this game.

\subsection{Motivation
}
In the popular model by
Friedkin and Johnsen
\cite{friedkin1990social,friedkin1997social} (which we will describe later), every node holds an initial bias in opinion. It could have  formed owing to 
the node's fundamental views, experiences,
information from news and other sources, 
opinion dynamics in the past, etc. 
This initial bias  plays an important role in determining a node's final opinion, and consequently the opinions of its neighbors and hence that of its neighbors' neighbors and so on. If nodes give significant weightage to their  biases, the camps would want to manipulate these biases. 
This could be achieved by campaigning in two phases,
 wherein the opinion at the conclusion of the first phase would act as the initial biased opinion for the second phase. Such campaigning is often used during elections and   marketing, in order to gradually drive the nodes' opinions.

In real-world scenarios,
the initial bias of a node often impacts a camp's effectiveness  on that node.
For instance, 
if the initial bias of a node is  positive, 
 the investment made by the good camp is likely to be more effective on it than that made by the bad camp. 
 The  reasoning is on similar lines as that of
 models in which a node pays more attention to opinions that do not differ too much from its own opinion (such as the bounded confidence model~\cite{krause2000discrete}).
 Since a camp's effectiveness depends on the nodes' biases, its investment in the first phase not only manipulates the biases for getting a head start in the second phase, but also the effectiveness of its investment in the second phase.

Furthermore,
with the possibility of campaigning in two phases, a camp could not only decide which nodes to invest on, but also how to split its available budget between the two phases.

\subsection{Related Work}
\label{sec:ODSNmultiphase_relevant}

\paragraph{Opinion dynamics in social networks}

The topic of opinion dynamics has received significant attention in the social networks community.
Xia, Wang, and Xuan \cite{xia2013opinion} give a multidisciplinary review of the field of opinion dynamics as a combination of the social processes and the analytical and computational tools.
A line of work deals with opinion diffusion in social networks under popular models such as the independent cascade and linear threshold \cite{easley2010networks, kempe2003maximizing, guille2013information}.
Another line of work addresses
 continuous-time diffusion models~\cite{gomez2016influence}, and  specifically point process models for diffusion~\cite{farajtabar2017coevolve}.
Lorenz~\cite{lorenz2007continuous} surveys several modeling frameworks concerning continuous opinion dynamics.
Rodriguez and Song
\cite{gomez2015diffusion}
present several diffusion
models
and address
problems such as
network estimation, influence estimation, and
influence control,
using methods from
machine learning, probabilistic modeling, event
history analysis, graph theory, and
network science.

Acemoglu and Ozdaglar \cite{acemoglu2011opinion} review several fundamental models of opinion dynamics,
some  noteworthy ones being
DeGroot \cite{degroot1974reaching}, Voter \cite{holley1975ergodic},  
Friedkin-Johnsen \cite{friedkin1990social,friedkin1997social}, bounded confidence \cite{krause2000discrete}, etc.
In Friedkin-Johnsen model, each node updates its opinion using a weighted  combination of its initial bias and its neighbors' opinions. 
In this paper, we generalize this model to multiple phases,
 while also incorporating
the camps' investments.

\paragraph{Identifying influential nodes}

Problems related to determining influential nodes 
for maximizing opinion adoption 
in social networks have been extensively studied 
in the literature 
\cite{easley2010networks,guille2013information}.
For instance,
Yildiz, Ozdaglar, and Acemoglu~\cite{yildiz2013binary} study the problem of optimal placement of stubborn nodes (whose opinion values stay unchanged) in the discrete binary opinions setting.
Gionis, Terzi, and Tsaparas~\cite{gionis2013opinion} study the problem of identifying such nodes whose positive opinions 
would maximize the overall positive opinion 
in the  network.
Kempe, Kleinberg, and Tardos~\cite{kempe2003maximizing} propose approximation algorithms for identifying influential nodes under the independent cascade and linear threshold models, which has since been followed by a plethora of increasingly efficient techniques~\cite{guille2013information}.

Lynn and Lee
\cite{lynn2016maximizing}
 study influence maximization in the
context of the Ising model,
by treating individual opinions as spins in an Ising system at dynamic
equilibrium; hence the goal is to maximize the magnetization of an Ising system given a budget
of external magnetic field.
Rossi and Ahmed
\cite{rossi2015role}
study techniques for discovering roles in networks by
proposing a  general formulation of roles of nodes based on the similarity of feature representation (in contrast to only the
graph representation as traditionally considered).
%
%
%
%
Abiteboul, Preda, and Cobena
\cite{abiteboul2003adaptive}
 develop an on-line algorithm for computing the  importance of a node, which adapts dynamically to the changes in  the network.
Grindrod et al.
\cite{grindrod2011communicability}
propose a way of extending classical node centrality measures from the literature on static networks, to be applied to evolving networks.
Gleich and Rossi
\cite{gleich2014dynamical}
 propose a dynamical system that captures changes to the  centrality of
nodes as external interest in those nodes vary, thus resulting in a time-dependent set of centrality scores.

The competitive setting has resulted in a number of  
game theoretic studies
\cite{myers2012clash,tzoumas2012game,etesami2016complexity}.
Bharathi, Kempe, and Salek~\cite{bharathi2007competitive} were among the first to study  opinion adoption in social networks from a game theoretic viewpoint.
Goyal, Heidari, and Kearns~\cite{goyal2014competitive}
present a model for the diffusion
of two competing opinions in a social network, in which nodes
first choose whether to adopt either of the  opinions or none
of them, and then choose which opinion to adopt.
Anagnostopoulos, Ferraioli, and Leonardi~\cite{anagnostopoulos2015competitive}
study this model in detail for some of the well-known dynamics.
Ghaderi and Srikant \cite{ghaderi2014opinion} study how the equilibrium of the game
depends on the network structure, nodes' initial opinions, the location of stubborn nodes and the extent of their stubbornness.

Specific to analytically tractable models such as DeGroot,
there have been studies 
in the competitive setting
 to identify influential nodes 
 and the amounts to be invested on them
\cite{grabisch2017strategic,dubey2006competing,bimpikis2016competitive}.
Dhamal et al.~\cite{dhamal2018framework} study a broader framework with respect to one such  model (Friedkin-Johnsen model), while considering a number of practically motivated settings such as those accounting for diminishing marginal returns on investment, adversarial behavior of the competitor, uncertainty regarding system parameters, and bound on the combined investment by the camps on each node.
Our work extends these studies to two phases, by identifying
 influential nodes in the two phases and how much they should be invested on in each phase.

\paragraph{Multi-phase opinion diffusion}

There have been a few studies on adaptive selection of influential nodes for opinion diffusion in multiple phases.
Singer \cite{singer2016influence} presents a survey of such adaptive methodologies. 
Golovin and Krause
\cite{golovin2011adaptive}
introduce adaptive submodularity, which facilitates  adaptive greedy algorithm to provide a performance guarantee.
Seeman and Singer
\cite{seeman2013adaptive} were among the first to dedicatedly study the framework of adaptive node selection.
Rubinstein, Seeman, and Singer
\cite{rubinstein2015approximability}
present 
adaptive  algorithms for selecting
nodes with heterogeneous  costs.
Horel and Singer 
\cite{horel2015scalable} develop scalable methods  
for models in which the influence of a set can be expressed as the sum of  influence of its members.
Correa et al. 
\cite{correa2015adaptive}
show that the adaptivity benefit   is bounded
if every pair of nodes randomly meet at the same rate. 
Badanidiyuru et al.
\cite{badanidiyuru2016locally} 
propose an 
algorithm  based on locally-adaptive policies.

Dhamal, Prabuchandran, and Narahari 
\cite{dhamal2016information} empirically study the problem of 
optimally splitting the available budget between two phases under the popular independent cascade model,
which has been extended to more than two phases in \cite{dhamal2018effectiveness}.
Tong et al.~\cite{tong2016adaptive}
 study adaptive node selection in a dynamic independent cascade model.
Yuan and Tang
\cite{tang2016no}
present 
a framework where  nodes can be selected before termination of an ongoing diffusion.
Sun et al.
\cite{sun2018multi} study the problem of
multi-round influence maximization, where the
goal is
to select nodes for each round to maximize the expected number
of nodes that are influenced in at least one round.
Mondal, Dhamal, and Narahari 
\cite{mondal2017two}
study a
setting
where the first phase is regular diffusion, while the second phase is boosted  using referral incentives.

While the  reasoning behind using multiple phases in these studies is adaptation of node selection strategy based on previous observations, we aim to use multiple phases for manipulating the initial biases of  nodes. This requires a very different conceptual and analytical treatment from the ones in the literature.

To the best of our knowledge, there has not been an analytical study on a rich model such as Friedkin-Johnsen, for opinion dynamics in two phases (not even for single camp).
The most relevant to this study is our earlier work  \cite{dhamal2018optmulti} where, however, a camp's influence on a node is assumed to be independent of the node's bias.
In this paper, we consider a more realistic setting by relaxing this assumption.
An interesting outcome of  relaxing this assumption is that, while the camps' optimal strategies turn out to be mutually independent in \cite{dhamal2018optmulti}, these strategies get coupled in our setting. In other words, the setting in \cite{dhamal2018optmulti} results in a {\em competition\/}, while the one in this paper results in a {\em game\/}.

\subsection{Our Contributions
}

Following are the specific contributions of this paper:

\begin{itemize} 

\item
We formulate the two-phase objective function under Friedkin-Johnsen model, 
where a node's final opinion in the first phase acts as its initial bias for the second phase, and 
the effectiveness of a camp's investment on the node depends on this initial bias.
(Section~\ref{sec:ODSNmultiphase_prob})

\item
For the non-competitive case, we develop a polynomial time algorithm for determining an optimal way to split a camp's budget between the two phases and the nodes to be invested on in the two phases.
(Section~\ref{sec:dep_onecamp})

\item 
For the competitive case involving both the camps, we show the existence of Nash equilibrium, and that it can be computed in polynomial time  under reasonable assumptions. (Section~\ref{sec:dep_2camps})

\item
Using simulations, we illustrate our analytically derived results on real-world  network datasets, and quantify the effects of the initial biases and the weightage attributed by nodes to their initial biases, as well as that of a camp deviating from its equilibrium strategy.
(Section~\ref{sec:ODSNmultiphase_sim})

\end{itemize}

\section{Our Model}
\label{sec:ODSNmultiphase_prob}

Given a social network,
let $V$ be the set of nodes, $n$ be the number of nodes (cardinality of $V$), and $E$ be the set of weighted  directed edges.
Our model can be viewed as a multiphase extension of \cite{dhamal2018framework}, and more broadly, an extension of Friedkin-Johnsen model \cite{friedkin1990social,friedkin1997social}.

\vspace{-2mm}
\subsection{Friedkin-Johnsen Model}
\label{sec:ODSN_model}

In Friedkin-Johnsen model,
prior to the process of opinion dynamics, each node holds a certain bias in its opinion.
We denote this opinion bias of a node $i$ by $z_i^0$, and the weightage that the node attributes to it by $w_{ii}^0$.

The network effect is captured by how much a node is influenced by each of its friends or connections, that is, how much weightage is attributed by a node to the opinion of each of its connections. Let $z_j$ be the opinion held by node $j$, and $w_{ij}$ be the weightage attributed by node $i$ to the opinion of node $j$. The influence on node $i$ owing to node $j$ is given by $w_{ij}z_j$, thus the net influence on $i$ owing to all of its connections is $\sum_{j \in V} w_{ij}z_j$ (where $w_{ij} \neq 0$ only if $j$ is a connection of $i$). 
The directed nature of the edges accounts for the fact that the weightage attributed by node $i$ to the opinion of node $j$ would, in general, be different from the weightage attributed by node $j$ to the opinion of node $i$.
Furthermore, undirected edges are a special case of directed edges, where $w_{ij}=w_{ji}$.
Note that  the edge weights could be negative as well (as justified in \cite{altafini2013consensus,proskurnikov2016opinion}). A negative edge weight $w_{ij}$ can be interpreted as distrust that node $i$ holds on node $j$, that is, $i$ would be driven towards adopting an opinion that is opposite to that held or suggested by $j$. 

Since in Friedkin-Johnsen model,
each node updates its opinion using a weighted convex combination of its bias and its neighbors' opinions,
 the update rule   is given by

\vspace{-3mm}
\begin{small}
\begin{align*}
\forall i \in V :\;
z_i
\leftarrow
w_{ii}^0 z_i^0
+ \sum_{j \in V} w_{ij}z_j
\end{align*}
\end{small}
\vspace{-3mm}

\noindent
We can also write this update rule as a recursion 
 (with iterating integer $\tau \geq  0$):
 
 \vspace{-3mm}
 \begin{small}
 \begin{align*}
 \forall i \in V :\;
 z_{i \, \langle \tau \rangle}
 =
 w_{ii}^0 z_i^0
 + \sum_{j \in V} w_{ij}z_{j \, \langle \tau-1 \rangle} \text{ , where} z_{i \, \langle 0 \rangle} = z_i^0
 \end{align*}
 \end{small}
 \vspace{-3mm}

Since the model follows an opinion update rule,
convergence is often a desirable property.
A standard assumption for guaranteeing convergence is
$
\sum_{j \in V} |w_{ij}| < 1
$.
We will later see how we use this condition in our analysis.

\subsection{Our Extended Model}
\label{sec:ODSNmultiphase_model}

As our opinion dynamics runs in two phases, most parameters have two values, one for each phase. For such a parameter, we denote its value corresponding to phase $p$ using superscript $(p)$, where $p=1$ for the first phase and $p=2$ for the second phase.
For instance, we denote the parameter corresponding to the opinion value of node $i$ after phase $p$ by $z_i^{(p)}$.
Table \ref{tab:notation} presents the notation.

\paragraph{Initial Bias}
Consistent with Friedkin-Johnsen model, every node $i$ holds an initial bias in opinion {$z_i^0 \in \mathbb{R}$} prior to the opinion dynamics process, and attributes a weightage of $w_{ii}^0$ to it. In the two-phase setting, $z_i^0$ acts as the initial bias for the first phase.
We denote the opinion value of node $i$ at the conclusion of the first phase  by $z_i^{(1)}$.
This acts as its initial bias for the second phase.
So by convention, we have $z_i^{(0)}=z_i^0$.
In the context of the fund collection example, 
$z_i^0$ is the initial bias of node $i$ based on its perception of the causes for which the funds are being collected by the camps.
For instance, a very positive $z_i^0$ would mean that, prior to campaigning and opinion dynamics process, node $i$ is  aligned towards contributing significantly to the funds pertaining to the good camp's  cause.

\paragraph{Network Effect}
Consistent with Friedkin-Johnsen model, we consider that the influence on node $i$ owing to node $j$ in phase $p$, is $w_{ij}z_j^{(p)}$,
where $w_{ij}$ is the weightage attributed by node~$i$ to the opinion of its connection $j$. 
In the fund collection example, 
the influence on node $i$ owing to its connection $j$ is the product of node $j$'s opinion regarding how much and which cause to contribute to, and the weightage attributed by node $i$ to node $j$'s opinion based on the level of trust and confidence that node $i$ has on node $j$'s opinion.
The net influence on node $i$ owing to all of its connections is, hence, $\sum_{j \in V} w_{ij}z_j^{(p)}$.

\begin{table}[t]
\caption{Notation table}
\label{tab:notation}
\vspace{-3mm}
\begin{center}
\begin{tabular}{|c|l|}
\hline
\T \B
$z_i^0$ & initial biased opinion of node $i$
\\ \hline
\T \B
$w_{ii}^0$ & weightage attributed by node $i$ to its initial bias
\\ \hline
\T \B
$w_{ij}$ & weightage attributed by node $i$ to the opinion of node $j$
\\ \hline
\T \B
$\theta_i$ & total weightage attributed by node $i$ to the camps' opinions
\\ \hline
\T \B
$w_{ig}^{(p)}$ & weightage attributed by node $i$ to the good camp in phase $p$
\\ \hline
\T \B
$w_{ib}^{(p)}$ & weightage attributed by node $i$ to the bad camp in phase $p$
\\ \hline
\T \B
$x_i^{(p)}$ & investment made by the good camp to directly influence node $i$ in phase $p$
\\ \hline
\T \B
$y_i^{(p)}$ & investment made by the bad camp to directly influence node $i$ in phase $p$
\\ \hline 
\T \B
$k_g$ & total budget of the good camp
\\ \hline 
\T \B
$k_b$ & total budget of the bad camp
\\ \hline 
\T \B
$z_i^{(p)}$ &  opinion of node $i$ at the end of phase $p$
\\ \hline
\end{tabular}
\end{center}
\vspace{-2mm}
\end{table}

\paragraph{Weightage to Campaigning}
We denote the weightage that node $i$ attributes to the good and bad campaigning in phase~$p$ by $w_{ig}^{(p)}$ and $w_{ib}^{(p)}$, respectively.
        Since we consider that the initial bias of a node  impacts the effectiveness of camps' investments, $z_i^{(p-1)}>0$ would likely result in $w_{ig}^{(p)} > w_{ib}^{(p)}$.
Note that $w_{ii}^0$ would also play a role since it quantifies the weightage given by node $i$ to its initial bias. 
We hence propose a  model  on this line, wherein $w_{ig}^{(p)}$ is a monotone non-decreasing function of $w_{ii}^0 z_i^{(p-1)}$, and $w_{ib}^{(p)}$ is a monotone non-increasing function of $w_{ii}^0 z_i^{(p-1)}$.
Let node $i$ attribute a total of $\theta_i$ to the influence weights of the camps, that is,
$w_{ig}^{(p)}+w_{ib}^{(p)}=\theta_i$.
We propose the following natural model:

\begin{small}
\vspace{-3mm}
\begin{align}
\hspace{-3mm}
w_{ig}^{(p)} = \theta_i \left( \frac{1+w_{ii}^0 z_i^{(p-1)}}{2} \right)
\text{ , }
w_{ib}^{(p)} = \theta_i \left( \frac{1-w_{ii}^0 z_i^{(p-1)}}{2} \right)
\label{eqn:WonV}
\end{align}
\end{small}
\vspace{-4mm}

\noindent
So in the fund collection example, if in a phase, node $i$'s bias is very positive (which means that it is already aligned towards contributing significantly so as to help the good camp's cause), it would be easier for the good camp to influence node $i$ than it would be for the competing camp.


\paragraph{Camp Investments}
The good and bad camps attempt to directly influence the nodes so that their opinions are driven towards being positive and negative, respectively. 
We denote the investments made by the good and bad camps on node $i$ in phase $p$ by $x_i^{(p)}$ and $y_i^{(p)}$ respectively ($x_i^{(p)},y_i^{(p)} \geq 0,  \forall {i \in V}$ for $p =1,2$). 
In the fund collection example, the investments by a camp on a node could be in the form of effort and time spent in presenting convincing arguments in favor of the cause supported by the camp.
Since the influence of good camp (positive opinion) on node $i$ in phase $p$ would be an increasing function of both $x_i^{(p)}$ and $w_{ig}^{(p)}$, we assume the influence to be $+w_{ig}^{(p)} x_i^{(p)}$ (maintaining the linearity of Friedkin-Johnsen model). Similarly, $-w_{ib}^{(p)} y_i^{(p)}$ is the influence of bad camp (negative opinion) on node $i$ in phase $p$. 
Let $k_g$ and $k_b$ be the respective budgets of the good and bad camps. Hence the  camps should invest in the two phases such that $\sum_{i \in V} ( x_i^{(1)} + x_i^{(2)} ) \leq k_g$ and $\sum_{i \in V} ( y_i^{(1)} + y_i^{(2)} ) \leq k_b$.

\paragraph{Matrix Forms}
Let $\mathbf{W}$ be the matrix consisting of  weights $w_{ij}$ for each pair of nodes $(i,j)$.
Let $\mathbf{z^0}$, $\mathbf{w^0}$, 
${\Theta}$,
$\mathbf{w_g}^{\mathbf{(p)}}$, $\mathbf{w_b}^{\mathbf{(p)}}$, $\mathbf{x^{(p)}}$, $\mathbf{y^{(p)}}$, $\mathbf{z^{(p)}}$
be the vectors consisting of  elements
$z_i^0$, $w_{ii}^0$, $\theta_i$, $w_{ig}^{(p)}$, $w_{ib}^{(p)}$, $x_i^{(p)}$, $y_i^{(p)}$, $z_i^{(p)}$, respectively.
Let  operation $\circ$ denote Hadamard vector product,
that is, $(\mathbf{e}\circ\mathbf{f})_{i} = {e}_i {f}_i$.

\paragraph{Opinion Update Rule} 

Recall that the  update rule in  Friedkin-Johnsen model is given by

\vspace{-8mm}
\begin{small}
\begin{align*}
\forall {i \in V}:
z_i
\leftarrow w_{ii}^{0} z_i^0
+ \sum_{j \in V} w_{ij}z_j
\end{align*}
\end{small}
\vspace{-3mm}

Extending to multiple phases, the update rule in the $p^{\text{th}}$ phase is

\vspace{-4mm}
\begin{small}
\begin{align*}
\forall {i \in V}:
z_i^{(p)}
\leftarrow w_{ii}^{0} z_i^{(p-1)}
+ \sum_{j \in V} w_{ij}z_j^{(p)}
\end{align*}
\end{small}
\vspace{-3mm}

Accounting for camps' investments,
we get
 
\begin{small}
\vspace{-3mm}
\begin{align}
\forall {i \in V}:\;
z_i^{(p)}
&\leftarrow w_{ii}^{0} z_i^{(p-1)}
+ \sum_{j \in V} w_{ij}z_j^{(p)}
+ w_{ig}^{(p)} x_i^{(p)}
- w_{ib}^{(p)} y_i^{(p)}
\nonumber
\end{align}
\vspace{-3mm}
\end{small}

\begin{small}
\vspace{-4mm}
\begin{align}
\hspace{-5mm}
\Longleftrightarrow 
\mathbf{z}^{\mathbf{(p)}} & \leftarrow   \mathbf{w^0}   \circ   \mathbf{z^{\mathbf{(p-1)}}}  +  \mathbf{W}\mathbf{z}^{\mathbf{(p)}}   +  \mathbf{w_g}^{\mathbf{(p)}}   \circ   \mathbf{x}^{\mathbf{(p)}}  -  \mathbf{w_b}^{\mathbf{(p)}}   \circ   \mathbf{y}^{\mathbf{(p)}}   
\label{eqn:update_rule_matrix}
\end{align}
\vspace{-4mm}
\end{small}

\noindent
In phase $p$,
{the vectors $\mathbf{x}^{\mathbf{(p)}}, \mathbf{y}^{\mathbf{(p)}}, \mathbf{z^{\mathbf{(p-1)}}}$ 
stay unchanged};
the weights $\mathbf{w_g}^{\mathbf{(p)}}, \mathbf{w_b}^{\mathbf{(p)}}$ (which depend on $\mathbf{z^{\mathbf{(p-1)}}}$) and $\mathbf{w^0}$
also stay unchanged; while
$\mathbf{z}^{\mathbf{(p)}}$ gets updated.
{Hence, writing the update rule as recursion (with iterating integer $\tau \geq  0$), we get}

\vspace{-8mm}
\begin{small}
\begin{align*}
\mathbf{z}_{\langle\tau\rangle}^{\mathbf{(p)}} &= \mathbf{W}\mathbf{z}_{\langle\tau-1\rangle}^{\mathbf{(p)}} + \mathbf{w^0}\circ\mathbf{z^{\mathbf{(p-1)}}} + \mathbf{w_g}^{\mathbf{(p)}}\circ\mathbf{x}^{\mathbf{(p)}} - \mathbf{w_b}^{\mathbf{(p)}}\circ\mathbf{y}^{\mathbf{(p)}}
\end{align*}
\vspace{-4mm}
\end{small}

\noindent
Solving the recursion simplifies it to

\vspace{-3mm}
\begin{small}
\begin{align*}
\hspace{-2mm}
\mathbf{z}_{\langle\tau\rangle}^{\mathbf{(p)}}   =  
\mathbf{W}^\tau \mathbf{z}_{\langle 0 \rangle}^{\mathbf{(p)}}   +   \big( \sum_{\eta=0}^{\tau-1}  \mathbf{W}^\eta \big)(\mathbf{w^0}   \circ   \mathbf{z^{\mathbf{(p-1)}}}   & +   \mathbf{w_g}^{\mathbf{(p)}}   \circ   \mathbf{x}^{\mathbf{(p)}}    -   \mathbf{w_b}^{\mathbf{(p)}}   \circ   \mathbf{y}^{\mathbf{(p)}}) 
\end{align*}
\vspace{-2mm}
\end{small}

\noindent
Now, the initial bias for phase $p$ : $\mathbf{z}_{\langle 0 \rangle}^{\mathbf{(p)}} = \mathbf{z}^{\mathbf{(p-1)}}$.
Also, since $\sum_{j \in V} |w_{ij}| < 1$, we have that $\mathbf{W}$ is a strictly substochastic matrix (sum of each row strictly less than 1). Hence
its spectral radius is less than 1.
So when $\tau \rightarrow \infty$, we have $\lim_{\tau \rightarrow \infty} \mathbf{W}^\tau  = \mathbf{0}$ and
$\lim_{\tau \rightarrow \infty} \sum_{\eta=0}^{\tau-1} \mathbf{W}^\eta = (\mathbf{I}-\mathbf{W})^{-1}$
\cite{grabisch2017strategic}.
Hence,

\begin{small}
\vspace{-1mm}
\begin{align*}
\lim_{\tau \rightarrow \infty} \mathbf{z}_{\langle\tau\rangle}^{\mathbf{(p)}} = (\mathbf{I} - \mathbf{W})^{-1} (\mathbf{w^0} \circ \mathbf{z^{\mathbf{(p-1)}}}  +  \mathbf{w_g}^{\mathbf{(p)}} \circ \mathbf{x}^{\mathbf{(p)}}  -  \mathbf{w_b}^{\mathbf{(p)}} \circ \mathbf{y}^{\mathbf{(p)}})
\end{align*}
\vspace{-1mm}
\end{small}

\noindent
which is a constant vector.
So the  dynamics in phase $p$ converges to 
the steady state

\begin{small}
\vspace{-6mm}
\begin{align}
\hspace{-2mm}
\mathbf{z}^{\mathbf{(p)}}  =  (\mathbf{I} - \mathbf{W})^{-1} (\mathbf{w^0} \circ \mathbf{z^{(p-1)}}   +  \mathbf{w_g}^{\mathbf{(p)}} \circ \mathbf{x^{(p)}}  -  \mathbf{w_b}^{\mathbf{(p)}} \circ \mathbf{y^{(p)}})
\label{eqn:basic}
\end{align}
\vspace{-3mm}
\end{small}

\subsection{Formulation of Two-Phase Objective Function}
\label{sec:multiphase}

We now derive the objective function $\sum_{i \in V} z_i^{(2)}$, the sum of opinion values of all the nodes at the end of the second phase.
Premultiplying Equation~(\ref{eqn:basic}) by $\mathbf{1}^T$ gives

\begin{small}
\vspace{-6mm}
\begin{align*}
\hspace{-1mm}
\mathbf{1}^T \mathbf{z}^{\mathbf{(p)}} = \mathbf{1}^T (\mathbf{I}-\mathbf{W})^{-1} (\mathbf{w^0}\circ\mathbf{z^{(p-1)}} + \mathbf{w_g}\circ\mathbf{x^{(p)}} - \mathbf{w_b}\circ\mathbf{y^{(p)}})
\end{align*}
\vspace{-3mm}
\end{small}

\noindent
Let 
 $\Delta = (\mathbf{I}-\mathbf{W})^{-1}$ and
 $\mathbf{r}^T = \mathbf{1}^T (\mathbf{I}-\mathbf{W})^{-1}$,
 that is, {$r_i=\sum_{j \in V} \Delta_{ji}$}.
Since $\Delta = \sum_{\eta=0}^\infty \mathbf{W}^\eta$, we have that $\Delta_{ji}$ is the influence that  $j$ receives from  $i$ through walks of all possible lengths.
So $r_i=\sum_{j \in V} \Delta_{ji}$ can be viewed as  overall influencing power of  $i$.
Substituting these in the above equation, we get

\begin{small}
\vspace{-2mm}
\begin{align*}
\sum_{i \in V} z_i^{(p)} = \sum_{i \in V} r_i ( w_{ii}^0 z_i^{(p-1)} +  w_{ig}x_i^{(p)} - w_{ib}y_i^{(p)} ) 
\end{align*}
\vspace{-2mm}
\end{small}

\noindent
{When $p=1$, this is the sum of opinions at the end of the first phase:}

\begin{small}
\vspace{-3mm}
\begin{align}
\sum_{i \in V} z_i^{(1)} = \sum_{i \in V} r_i ( w_{ii}^0 z_i^{0} +  w_{ig}^{(1)} x_i^{(1)} - w_{ib}^{(1)} y_i^{(1)} ) 
\label{eqn:sumatP_1}
\end{align}
\vspace{-2mm}
\end{small}

\noindent
Similarly, when $p=2$, the sum of opinion values at the end of the second phase is (by also using Equations~(\ref{eqn:WonV})):

\begin{small}
\vspace{-3mm}
\begin{align}
&\sum_{j \in V} z_j^{(2)} 
= \sum_{j \in V} r_j ( w_{jj}^0 z_j^{(1)} + w_{jg}^{(2)} x_j^{(2)} - w_{jb}^{(2)} y_j^{(2)} )
\nonumber
\\
&= \sum_{j \in V} r_j \left( w_{jj}^0 z_j^{(1)} + \frac{\theta_j}{2} ( {1 + w_{jj}^0 z_j^{(1)}} ) x_j^{(2)} -  \frac{\theta_j}{2} ( {1 - w_{jj}^0 z_j^{(1)}} ) y_j^{(2)} \right)
\nonumber
\\
&= \sum_{j \in V} r_j w_{jj}^0 z_j^{(1)} \left( 1  +  \frac{\theta_j}{2} x_j^{(2)}  +  \frac{\theta_j}{2} y_j^{(2)} \right) + \sum_{j \in V} r_j \frac{\theta_j}{2} (x_j^{(2)}  -  y_j^{(2)})
\label{eqn:pausehere}
\end{align}
\vspace{-1mm}
\end{small}

\noindent
{The first term $\sum_{j \in V} r_j w_{jj}^0 z_j^{(1)} \left( 1 + \frac{\theta_j}{2} x_j^{(2)} + \frac{\theta_j}{2} y_j^{(2)} \right)$
 can be} obtained 
by 
premultiplying (\ref{eqn:sumatP_1}) by $(\mathbf{r} \circ \mathbf{w^0} \circ (\mathbf{1} + \frac{\Theta}{2}\circ \mathbf{x^{(2)}} + \frac{\Theta}{2}\circ \mathbf{y^{(2)}}))^T$ and using (\ref{eqn:WonV}).
Hence,

\begin{small}
\begin{align*}
&\sum_{j\in V} r_j w_{jj}^0 z_j^{(1)} \left( 1 + \frac{\theta_j}{2} x_j^{(2)} + \frac{\theta_j}{2} y_j^{(2)} \right)
\\
&= \left( \mathbf{r} \circ \mathbf{w^0} \circ (\mathbf{1} + \frac{\Theta}{2}\circ \mathbf{x^{(2)}} + \frac{\Theta}{2}\circ \mathbf{y^{(2)}}) \right)^T
(\mathbf{I}-\mathbf{W})^{-1}
(\mathbf{w^0}\circ\mathbf{z^{(0)}} 
+\mathbf{w_g^{(1)}}\circ\mathbf{x^{(1)}} - \mathbf{w_b^{(1)}}\circ\mathbf{y^{(1)}}) 
\\
&= \sum_{i\in V} \left( \sum_{j\in V} r_j w_{jj}^0 \left(1+\frac{\theta_j}{2} x_j^{(2)} + \frac{\theta_j}{2} y_j^{(2)}\right) \Delta_{ji} \right)
( w_{ii}^0 z_i^{0} 
 + w_{ig}^{(1)} x_i^{(1)} - w_{ib}^{(1)} y_i^{(1)} )
\\
&= \sum_{i\in V} \left( \sum_{j\in V} r_j w_{jj}^0 \left(1+\frac{\theta_j}{2} x_j^{(2)} + \frac{\theta_j}{2} y_j^{(2)}\right) \Delta_{ji} \right)
\Bigg( w_{ii}^0 z_i^{0}  
+ \frac{\theta_i}{2} ( {1+w_{ii}^0 z_i^{0}} ) x_i^{(1)} - \frac{\theta_i}{2} ( {1-w_{ii}^0 z_i^{0}} ) y_i^{(1)} \Bigg)
\\
&= \sum_{i\in V} \sum_{j\in V} \left( w_{ii}^0 z_i^{0}
\left(1+\frac{\theta_i}{2} x_i^{(1)} + \frac{\theta_i}{2} y_i^{(1)} \right) + \frac{\theta_i}{2}(x_i^{(1)}-y_i^{(1)}) \right)
\left( r_j w_{jj}^0 \Delta_{ji}  \left(1+\frac{\theta_j}{2} x_j^{(2)} + \frac{\theta_j}{2} y_j^{(2)}\right) \right)
\end{align*}
\end{small}
Substituting this in (\ref{eqn:pausehere}), we get we get that 
$\sum_{i \in V} z_i^{(2)}$ equals
\begin{small}
\begin{align}
\nonumber
\sum_{i\in V} \sum_{j\in V} \left( w_{ii}^0 z_i^{0}
\left(1+\frac{\theta_i}{2} x_i^{(1)} + \frac{\theta_i}{2} y_i^{(1)} \right) + \frac{\theta_i}{2}(x_i^{(1)}-y_i^{(1)}) \right)
\left( r_j w_{jj}^0 \Delta_{ji}  \left(1+\frac{\theta_j}{2} x_j^{(2)} + \frac{\theta_j}{2} y_j^{(2)}\right) \right)
\\
+ \sum_{j \in V} r_j \frac{\theta_j}{2} (x_j^{(2)}  -  y_j^{(2)})
\label{eqn:orig_dep_2camps}
\end{align}
\end{small}
\vspace{-3mm}

For notational simplicity, let $b_{ji} = r_j w_{jj}^0 \Delta_{ji}$, $c_i = w_{ii}^0 z_i^{0}$.
We  saw that $r_i=\sum_{j \in V} \Delta_{ji}$ indicates the influencing power of node $i$.
Now, $b_{ji} = \Delta_{ji} w_{jj}^0 r_j$ quantifies the overall influence $\Delta_{ji}$ of  $i$ on  $j$, which would give weightage $w_{jj}^0$ to its initial bias in the next phase, and have an influencing power of $r_j$ in the next phase.
Hence $b_{ji}$ can be interpreted as the influence of node $i$ on the network through node $j$, looking one phase ahead.
So, $\sum_{j \in V} b_{ji}$ can be viewed as the overall influencing power of node $i$, looking one phase ahead.
We denote $s_i=\sum_{j \in V} b_{ji} = \sum_{j \in V} r_j w_{jj}^0 \Delta_{ji}$.
In matrix notation, we have $\mathbf{s} = \Delta^T ( \mathbf{r} \circ \mathbf{w^0} )$.
We identify the role of parameter $s_i$ in our analysis later.

\paragraph{Two-phase Katz centrality}
Note that
$r_i = \left( (\mathbf{I}-\mathbf{W}^T)^{-1} \mathbf{1} \right)_i$ 
can be interpreted as a form of {\em Katz centrality} \cite{katz1953new} of node $i$.
The original Katz centrality is defined for an adjacency matrix $\mathcal{A}$ with all edges having the same weight $\alpha$.
Katz centrality of node $i$ is defined as the $i^{\text{th}}$ element of vector
$
\left( \left( \mathbf{I}-\alpha \mathcal{A}^T\right)^{-1} \! - \mathbf{I} \right) \mathbf{1}
=
 \left( \mathbf{I}-\alpha \mathcal{A}^T\right)^{-1} \mathbf{1}  - \mathbf{1}
$,
for $0\!<\!\alpha \!<\! \frac{1}{|\rho|}$ where $\rho$ is the largest eigenvalue of $\mathcal{A}$.
In our case where 
$
\mathbf{r}= \left( \mathbf{I}- \mathbf{W}^T\right)^{-1} \mathbf{1}
$,
 $\mathcal{A}$ is replaced by the weighted adjacency matrix $\mathbf{W}$, for which $|\rho|<1$ (since $\mathbf{W}$ is strictly substochastic), and we have $\alpha=1$. The subtraction of  vector $\mathbf{1}$ is common for all nodes, so its relative effect can be ignored.
Hence, $r_i$ measures node $i$'s relative influence in a social network when there are is no subsequent phase to follow. In the two-phase setting, this applies to the second phase since it is the terminal phase.
However,  while selecting optimal nodes in the first phase, when there is a subsequent phase to follow, the effectiveness of  node $i$ depends on its influencing power over those nodes ($j$), which would give good weightage ($w_{jj}^0$) to their initial biases in the second phase, as well as have a good influencing power over other nodes ($r_j$) in the second phase. This is precisely captured by $s_i$, and so it can be viewed as  the two-phase  Katz centrality.

Now, using this simplified notation, Equation (\ref{eqn:orig_dep_2camps}) can be written as

\begin{small}
\vspace{-3mm}
\begin{align}
\nonumber 
\sum_{i \in V} z_i^{(2)} 
= 
& \sum_{i \in V} \sum_{j \in V} c_i b_{ji} 
  +   \sum_{j \in V} x_j^{(2)} \frac{\theta_j}{2} \Big( \sum_{i \in V}  c_i b_{j i}   +   r_j \Big)
  +   \sum_{j \in V} y_j^{(2)} \frac{\theta_j}{2} \Big( \sum_{i \in V} c_i b_{j i}   -   r_j \Big)
\\ \hspace{-3mm}&\;\;\;\;
\nonumber
+ \sum_{i \in V} x_i^{(1)}  \frac{\theta_i}{2} (1   +   c_i) \Big( s_{i}
  +   \sum_{j \in V} x_j^{(2)} \frac{\theta_j}{2} b_{j i}
  +   \sum_{j \in V} y_j^{(2)} \frac{\theta_j}{2} b_{j i} \Big)
\\ \hspace{-3mm}&\;\;\;\;
- \sum_{i \in V} y_i^{(1)} \frac{\theta_i}{2} (1   -   c_i) \Big( s_i 
  +   \sum_{j \in V} x_j^{(2)} \frac{\theta_j}{2} b_{j i}
  +   \sum_{j \in V} y_j^{(2)} \frac{\theta_j}{2} b_{j i} \Big)
\label{eqn:raw2phase2camp}
\end{align}
\vspace{-4mm}
\end{small}

\subsection{The  Two-phase Investment Game}

Following our model, we have that $(\mathbf{x^{(1)}},\mathbf{x^{(2)}})$ is the strategy of the good camp for the two phases, and $(\mathbf{y^{(1)}},\mathbf{y^{(2)}})$ is the strategy of the bad camp.
Given an investment strategy profile $\big((\mathbf{x^{(1)}},\mathbf{x^{(2)}}),(\mathbf{y^{(1)}},\mathbf{y^{(2)}})\big)$,
let $u_g\big((\mathbf{x^{(1)}},\mathbf{x^{(2)}}),(\mathbf{y^{(1)}},\mathbf{y^{(2)}})\big)$ 
be  the good camp's utility
and $u_b\big((\mathbf{x^{(1)}},\mathbf{x^{(2)}}),(\mathbf{y^{(1)}},\mathbf{y^{(2)}})\big)$ be the bad camp's utility.
The good camp aims to maximize the value of  (\ref{eqn:raw2phase2camp}), while the bad camp aims to minimize it. So,

\begin{small}
\vspace{-4mm}
\begin{align}
\nonumber
u_g\big((\mathbf{x^{(1)}},\mathbf{x^{(2)}}),(\mathbf{y^{(1)}},\mathbf{y^{(2)}})\big)&= \sum_{i \in V} z_i^{(2)}
\\
\;\;\; \text{and} \;\;\;
u_b\big((\mathbf{x^{(1)}},\mathbf{x^{(2)}}),(\mathbf{y^{(1)}},\mathbf{y^{(2)}})\big)&= -\sum_{i \in V} z_i^{(2)}
\label{eqn:2phaseutilities}
\end{align}
\vspace{-4mm}
\end{small}

\noindent
with the following constraints on the investment strategies:

\begin{small}
\vspace{-4mm}
\begin{gather}
\nonumber
\sum_{i \in V} \big( x_i^{(1)} + x_i^{(2)} \big) \leq k_g \;\;,\;\; \sum_{i \in V} \big( y_i^{(1)} + y_i^{(2)} \big) \leq k_b
\\
\nonumber
\forall {i \in V}: x_i^{(1)} , x_i^{(2)} , y_i^{(1)} , y_i^{(2)} \geq 0
\end{gather}
\vspace{-4mm}
\end{small}

\noindent
The game can thus be viewed as a two-player zero-sum game,
where the players determine their investment strategies $(\mathbf{x^{(1)}},\mathbf{x^{(2)}})$ and $(\mathbf{y^{(1)}},\mathbf{y^{(2)}})$;
the good camp invests as per $\mathbf{x^{(1)}}$ in the first phase and as per $\mathbf{x^{(2)}}$ in the second phase, and 
the bad camp invests as per $\mathbf{y^{(1)}}$ in the first phase and as per $\mathbf{y^{(2)}}$ in the second phase.
Our objective essentially is to find the Nash equilibrium strategies of the two camps.

First, we consider a simplified yet interesting (and not-yet-studied-in-literature) case where  budget of one of the camps is $0$ (say $k_b=0$); so effectively
we have only the good camp. 

\vspace{-2mm}
\section{The Non-Competitive Case}
\label{sec:dep_onecamp}

\vspace{-1mm}
For the non-competitive case, when there is only one camp (say the good camp, without loss of analytical generality), we  have $y_i^{(1)} = y_i^{(2)} = 0, \forall {i \in V}$ in
Equation~(\ref{eqn:raw2phase2camp}).
So we get

\begin{small}
\vspace{-2mm}
\begin{align}
\hspace{-4mm}
\sum_{i \in V} z_i^{(2)} 
&= 
 \sum_{i \in V} \sum_{j \in V} c_i b_{ji} 
  +   \sum_{j \in V} x_j^{(2)} \frac{\theta_j}{2} \Big( \sum_{i \in V}  c_i b_{j i}   +   r_j \Big)
+ \sum_{i \in V} x_i^{(1)}  \frac{\theta_i}{2} (1   +   c_i) \Big( s_{i}
  +   \sum_{j \in V} x_j^{(2)} \frac{\theta_j}{2} b_{j i}
  \Big)
  \nonumber
  \\
  \nonumber
&=  \sum_{i \in V} \sum_{j \in V} c_i b_{ji} 
   + \sum_{i \in V} x_i^{(1)}  \frac{\theta_i}{2} (1   +   c_i)  s_{i}
    +   \sum_{j \in V} x_j^{(2)} \frac{\theta_j}{2} \Big( \sum_{i \in V}  c_i b_{j i}   +   r_j \Big)
  \\ &\;\;\;\;\;\;\;\;\;\;\;\;\;\;\;\;\;\;\;\;\;\;\;\;\;\;\;\;\;\;\;\;\;\;\;\;\;\;\;\;\;\;\;\;\;\;\;\;\;\;\;\;
    + \sum_{i \in V} x_i^{(1)}  \frac{\theta_i}{2} (1   +   c_i)  \sum_{j \in V} x_j^{(2)} \frac{\theta_j}{2} b_{j i}
      \label{eqn:raw2phase}
    \\
&= \sum_{i \in V} x_i^{(1)}  \frac{\theta_i}{2} (1   +   c_i) \Big( s_{i}
  +   \sum_{j \in V} x_j^{(2)} \frac{\theta_j}{2} b_{j i}
  \Big)  + \text{(terms independent of $x_i^{(1)}$)}
   \label{eqn:raw2phasefix2}
  \\
  &=  \sum_{j \in V} x_j^{(2)} \frac{\theta_j}{2} \Big(  r_j +  \sum_{i \in V}  c_i b_{j i}  + \sum_{i \in V} x_i^{(1)}  \frac{\theta_i}{2} (1   +   c_i) b_{ji} \Big) + \text{(terms independent of $x_j^{(2)}$)}
   \label{eqn:raw2phasefix1}
\end{align}
\vspace{-3mm}
\end{small}

\noindent
Deducing from Equations (\ref{eqn:raw2phasefix2}) and (\ref{eqn:raw2phasefix1}) that $\sum_{i \in V} z_i^{(2)}$
 is a bilinear function in $\mathbf{x^{(1)}}$ and $\mathbf{x^{(2)}}$, we prove our next result.

Let the budget $k_g$ be split such that $k_g^{(1)}$ and $k_g^{(2)}$ are the first and second phase investments, respectively.
Our method of finding optimal $k_g^{(1)}$ and $k_g^{(2)}$ is to first search for them in the search space $k_g^{(1)}+k_g^{(2)} \in (0, k_g]$ and then compare the thus obtained value of $\sum_{i \in V} z_i^{(2)}$ with that corresponding to $k_g^{(1)}=k_g^{(2)}=0$, so as to get the optimal value of $\sum_{i \in V} z_i^{(2)}$.

\begin{proposition}
In the search space $k_g^{(1)}+k_g^{(2)} \in (0,k_g]$,
it is optimal  for the good camp  to exhaust its entire budget $(k_g^{(1)}+k_g^{(2)}=k_g)$, 
and to invest on at most one node in each phase.
\label{prop:allornothing}
\end{proposition}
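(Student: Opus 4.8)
The plan is to exploit the bilinearity of $\sum_{i\in V}z_i^{(2)}$ in $\mathbf{x^{(1)}}$ and $\mathbf{x^{(2)}}$ that was already extracted in Equations~(\ref{eqn:raw2phasefix2}) and~(\ref{eqn:raw2phasefix1}). Reading the coefficients off Equation~(\ref{eqn:raw2phase}), write
\[
\sum_{i\in V}z_i^{(2)}\;=\;C\;+\;\sum_{i\in V}\alpha_i\,x_i^{(1)}\;+\;\sum_{j\in V}\beta_j\,x_j^{(2)}\;+\;\sum_{i\in V}\sum_{j\in V}\Gamma_{ij}\,x_i^{(1)}x_j^{(2)},
\]
with $C=\sum_i\sum_j c_ib_{ji}$, $\alpha_i=\tfrac{\theta_i}{2}(1+c_i)s_i$, $\beta_j=\tfrac{\theta_j}{2}\bigl(r_j+\sum_{i\in V}c_ib_{ji}\bigr)$ and $\Gamma_{ij}=\tfrac{\theta_i}{2}(1+c_i)\cdot\tfrac{\theta_j}{2}b_{ji}$. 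The only property I will use is that this objective is affine in $\mathbf{x^{(1)}}$ for every fixed $\mathbf{x^{(2)}}$ and affine in $\mathbf{x^{(2)}}$ for every fixed $\mathbf{x^{(1)}}$.

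First I would fix the split, i.e.\ fix $k_g^{(1)}=K_1\ge 0$ and $k_g^{(2)}=K_2\ge 0$ with $K_1+K_2\in(0,k_g]$, so that the feasible region becomes the product of the two scaled simplices $\{\mathbf{x^{(1)}}\ge\mathbf{0}:\mathbf{1}^T\mathbf{x^{(1)}}=K_1\}$ and $\{\mathbf{x^{(2)}}\ge\mathbf{0}:\mathbf{1}^T\mathbf{x^{(2)}}=K_2\}$. A function that is affine in each of its two arguments has a maximizer over such a product that is a pair of vertices: start from any maximizer, replace $\mathbf{x^{(1)}}$ by a vertex of its simplex maximizing the affine map $\mathbf{x^{(1)}}\mapsto\sum_iz_i^{(2)}$ (the value cannot drop), then do the same in the second coordinate. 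Since the vertices of $\{\mathbf{x}\ge\mathbf{0}:\mathbf{1}^T\mathbf{x}=K\}$ are the points $K\mathbf{e}_i$ (and just $\mathbf{0}$ when $K=0$), this yields the ``at most one node in each phase'' part and reduces the whole problem to maximizing
\[
\psi_{ij}(K_1,K_2)\;:=\;C+K_1\alpha_i+K_2\beta_j+K_1K_2\,\Gamma_{ij}
\]
over $i,j\in V$ and over $(K_1,K_2)$ in the triangle $T=\{K_1,K_2\ge 0,\ K_1+K_2\le k_g\}$ subject to $K_1+K_2>0$.

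Next I would argue that for each fixed $i,j$ the maximum of $\psi_{ij}$ over $T$ is attained at the corner $(0,0)$ or on the hypotenuse $\{K_1+K_2=k_g\}$. Indeed $\psi_{ij}$ is biaffine in the two scalars $(K_1,K_2)$, hence affine along every line parallel to a coordinate axis; therefore it cannot have a local maximum at an interior point of $T$ (its unique candidate interior stationary point carries the indefinite Hessian $\bigl(\begin{smallmatrix}0 & \Gamma_{ij}\\ \Gamma_{ij} & 0\end{smallmatrix}\bigr)$ and is a saddle), nor in the relative interior of the leg $\{K_2=0\}$ or $\{K_1=0\}$, on each of which $\psi_{ij}$ is affine and thus maximized at an endpoint; and when $\Gamma_{ij}=0$ the claim is immediate since then $\psi_{ij}$ is affine on $T$. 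Finally, $\psi_{ij}(0,0)=C$ for \emph{every} pair $i,j$, and this common value $C$ is exactly the zero-investment value that has been deliberately split off from the search space $k_g^{(1)}+k_g^{(2)}\in(0,k_g]$. Hence, inside that search space, the optimum is attained by some $\psi_{ij}$ on its hypotenuse, i.e.\ by a strategy with $k_g^{(1)}+k_g^{(2)}=k_g$ that invests on at most one node per phase, which is the assertion of Proposition~\ref{prop:allornothing}. (If it happened that $\alpha_i<0$ and $\beta_j<0$ for all $i,j$, the supremum over the search space would equal $C$ and only be approached as the total investment tends to $0$; but this is precisely the regime in which no investment is optimal overall, which is covered by the separate comparison with $k_g^{(1)}=k_g^{(2)}=0$.)

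The routine part is the one-node-per-phase claim --- it is just ``a linear objective over a simplex has a vertex optimum'' applied in each phase in turn, the only care being the degenerate cases $K_1=0$, $K_2=0$ and $\Gamma_{ij}=0$. I expect the real content to lie in the budget-exhaustion claim: one must exclude maximizers that leave part of the budget unspent, and the lever for this is the biaffinity of $\psi_{ij}$, which forbids an interior optimum of $T$ and reduces ``not exhausting the budget'' to the single corner $(0,0)$ --- a corner that is removed by hypothesis because the zero-investment case is handled separately.
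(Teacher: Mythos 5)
Your proof is correct and follows essentially the same route as the paper's: both exploit the bilinearity of the objective in $\mathbf{x^{(1)}}$ and $\mathbf{x^{(2)}}$ and reduce each phase's investment to a single node by maximizing an affine function over a simplex at a vertex, the paper doing so via an informal iterative-improvement argument rather than your explicit two-stage decomposition into node selection and the budget-split triangle $T$. Your treatment is somewhat more careful at the margins --- in particular in noting that when all coefficients are negative the supremum over $k_g^{(1)}+k_g^{(2)}\in(0,k_g]$ equals $C$ and is only approached, not attained, a degenerate case the paper glosses over but which is indeed absorbed by its separate comparison with $k_g^{(1)}=k_g^{(2)}=0$.
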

\begin{proof}
Given any $\mathbf{x^{(2)}}$, Expression (\ref{eqn:raw2phasefix2}) can be maximized w.r.t. $\mathbf{x^{(1)}}$ by allocating $k_g - \sum_{j \in V} x_j^{(2)}$ to a single node $i$ that maximizes
$
\frac{\theta_i}{2} (1   +   c_i) \left( s_{i}
  +   \sum_{j \in V} x_j^{(2)} \frac{\theta_j}{2} b_{j i}
  \right) 
  $,
  if this value is positive.
  In case of multiple such nodes, one node can be chosen at random.
  If this value is non-positive for all  nodes, it is optimal to have $\mathbf{x^{(1)}} = \mathbf{0}$.
  When $\mathbf{x^{(1)}} = \mathbf{0}$, 
  Expression (\ref{eqn:raw2phasefix1}) now implies that it is optimal to allocate the entire budget $k_g$ in  second phase to a single node $j$ that maximizes
$
    \frac{\theta_j}{2} \left( r_j + \sum_{i \in V} c_i b_{ji}  \right)
$,
      if this value is positive. 
      If this value is non-positive for all  nodes, it is optimal to have $\mathbf{x^{(2)}} = \mathbf{0}$.
      This is the case where starting with an $\mathbf{x^{(2)}}$, we conclude that it is optimal to either invest $k_g - \sum_{j \in V} x_j^{(2)}$ on a single node in  first phase, or invest the entire budget $k_g$ on a single node in  second phase, or invest in neither phase.
      
      Similarly using (\ref{eqn:raw2phasefix1}),
starting with an $\mathbf{x^{(1)}}$, we can conclude that it is optimal to either invest $k_g - \sum_{j \in V} x_j^{(1)}$ on a single node in the second phase, or invest the entire $k_g$ on a single node in the first phase, or invest in neither phase.

So starting from any $\mathbf{x^{(1)}}$ or $\mathbf{x^{(2)}}$, we can iteratively improve (need not be strictly) on the value of (\ref{eqn:raw2phase}) by investing on at most one node in a given phase.
Furthermore, it is suboptimal to have $k_g^{(1)}+k_g^{(2)}<k_g$ unless $k_g^{(1)}=k_g^{(2)}=0$. 

Note that, in case of multiple  nodes maximizing the respective coefficients
(for instance, 
$
\frac{\theta_i}{2} (1   +   c_i) \left( s_{i}
  +   \sum_{j \in V} x_j^{(2)} \frac{\theta_j}{2} b_{j i}
  \right) 
  $ while determining optimal
  $\mathbf{x^{(1)}}$)
  we mentioned that 
   one such node can be chosen at random.
   So, there may be  optimal strategies in which the camp could invest on multiple nodes in a phase.
   However, since investing on one node per phase suffices to achieve the optimum, it is an optimal strategy (not the only optimal strategy) to invest on at most one node in each phase.
\end{proof}

Following Proposition \ref{prop:allornothing}, there exist optimal vectors $\mathbf{x^{(1)}}$ and $\mathbf{x^{(2)}}$ that maximize (\ref{eqn:raw2phase}), such that $x_{\alpha}^{(1)} = k_g^{(1)}, x_{\beta}^{(2)} = k_g^{(2)} , x_{i \neq \alpha}^{(1)} = x_{j \neq \beta}^{(2)} = 0$.
Now the next step is to find  nodes $\alpha$ and $\beta$ that maximize (\ref{eqn:raw2phase}).
By incorporating $\alpha$ and $\beta$ in (\ref{eqn:raw2phase}), 
we have  
$\sum_{i \in V} z_i^{(2)}$ equals

\begin{small}
\vspace{-1mm}
\begin{align}
\nonumber
& \sum_{i \in V} \sum_{j \in V} c_i b_{ji} +  k_g^{(1)}  \frac{\theta_{\alpha}}{2} (1+c_{\alpha}) s_{\alpha}
  + k_g^{(2)} \frac{\theta_{\beta}}{2}  \Big(\sum_{i \in V} c_i b_{\beta i} + r_{\beta}\Big)
  \\ &\;\;\;\;\;\;\;\;\;\;\;\;\;\;\;\;\;\;\;\;\;\;\;\;\;\;\;\;\;\;\;\;\;\;\;\;\;\;\;\;\;\;\;\;\;\;\;\;\;\;
  + k_g^{(1)}  k_g^{(2)} \frac{\theta_{\alpha}\theta_{\beta}}{4}  (1+c_{\alpha}) b_{\beta \alpha}
  \label{eqn:twophase_onecamp_1}
\end{align}
\end{small}
\vspace{-2mm}

Now, for a given pair $(\alpha,\beta)$, we will find the optimal values of $k_g^{(1)}$ and $k_g^{(2)}$ from (\ref{eqn:twophase_onecamp_1}). From Proposition \ref{prop:allornothing}, we first consider $k_g^{(2)} = k_g - k_g^{(1)}$. So the expression to be maximized is

\begin{small}
\vspace{-1mm}
\begin{align*}
& \sum_{i \in V} \sum_{j \in V} c_i b_{ji} +  k_g^{(1)}  \frac{\theta_{\alpha}}{2} (1+c_{\alpha}) s_{\alpha}
  + (k_g   -   k_g^{(1)}) \frac{\theta_{\beta}}{2}  \Big(\sum_{i \in V} c_i b_{\beta i} + r_{\beta}\Big)
  \\ &\;\;\;\;\;\;\;\;\;\;\;\;\;\;\;\;\;\;\;\;\;\;\;\;\;\;\;\;\;\;\;\;\;\;\;\;\;\;\;\;\;\;\;\;\;\;\;\;\;\;
  + k_g^{(1)}  (k_g   -   k_g^{(1)}) \frac{\theta_{\alpha}\theta_{\beta}}{4}  (1+c_{\alpha}) b_{\beta \alpha}
\end{align*}
\end{small}
\vspace{-2mm}

\noindent
Equating its first derivative w.r.t. $k_g^{(1)}$ to zero, we get the candidate $k_g^{(1)}$ for pair $(\alpha,\beta)$ to be

\begin{small}
\vspace{-2.5mm}
\begin{align*}
\frac{k_g}{2} + \frac{ s_{\alpha}}{\theta_{\beta}b_{\beta \alpha}} - \frac{\sum_{i \in V} b_{\beta i}c_i + r_{\beta}}{\theta_{\alpha}b_{\beta \alpha}(1+c_{\alpha})}
\end{align*}
\vspace{-1.5mm}
\end{small}

\noindent
A valid value of $k_g^{(1)}$ can be obtained only if 
the denominators in  above expression are non-zero. 
However, a zero denominator would mean that Expression (\ref{eqn:twophase_onecamp_1}) is linear, resulting in the optimal $k_g^{(1)}$ being either $0$ or $k_g$.
Also, if the second derivative with respect to $k_g^{(1)}$ is positive, that is, 
$- \theta_{\alpha}\theta_{\beta} (1+c_{\alpha}) b_{\beta \alpha} > 0$,
optimal $k_g^{(1)} $ is either $0$ or $k_g$.
If the second derivative with respect to $k_g^{(1)}$ is negative: $- \theta_{\alpha}\theta_{\beta} (1+c_{\alpha}) b_{\beta \alpha}   =  - \theta_{\alpha}\theta_{\beta} r_\beta w_{\beta\beta}^0 \Delta_{\beta\alpha} (w_{\alpha\alpha}^0 z_\alpha^{0} + 1)  <  0$,
and since $k_g^{(1)}$ is bounded in $[0,k_g]$,  optimal $k_g^{(1)}$ for  pair $(\alpha,\beta)$ is
(since $c_i=w_{ii}^0 z_i^{0}$, $b_{ji}=r_j w_{jj}^0 \Delta_{ji}$, $s_i=\sum_{j \in V} b_{ji}$):

\begin{small}
\vspace{-2mm}
\begin{align}
\hspace{-3mm}
\min   \left\{   \max   \bigg\{  \frac{k_g}{2}   +   \frac{s_{\alpha}}{\theta_{\beta} r_{\beta} w_{\beta \beta}^0 \Delta_{\beta \alpha}}   -   \frac{1+w_{\beta \beta}^0 \sum_{i \in V}  \Delta_{\beta i} w_{ii}^0 z_i^{0}}{\theta_{\alpha}  w_{\beta \beta}^0 \Delta_{\beta \alpha} (1   +   w_{\alpha \alpha}^0 z_{\alpha}^{0})} , 0  \bigg\}  ,   k_g   \right\}
\label{eqn:optk1_new}
\end{align}
\vspace{-4mm}
\end{small}

\noindent
and the corresponding optimal value of $k_g^{(2)}$ for  pair $(\alpha,\beta)$  is

\begin{small}
\vspace{-2mm}
\begin{align}
\hspace{-3mm}
\min   \left\{   \max   \bigg\{ \frac{k_g}{2}   -   \frac{s_{\alpha}}{\theta_{\beta} r_{\beta} w_{\beta \beta}^0 \Delta_{\beta \alpha}}   +   \frac{1+w_{\beta \beta}^0 \sum_{i \in V}  \Delta_{\beta i} w_{ii}^0 z_i^{0}}{\theta_{\alpha}  w_{\beta \beta}^0 \Delta_{\beta \alpha} (1   +   w_{\alpha \alpha}^0 z_{\alpha}^{0})} , 0  \bigg\} , k_g   \right\}
\label{eqn:optk2_new}
\end{align}
\vspace{-3mm}
\end{small}

When we assumed $k_g^{(1)}$ and $k_g^{(2)}$ to be fixed, we had to iterate through all $(\alpha,\beta)$ pairs to determine the one that gives the optimal value of Expression (\ref{eqn:twophase_onecamp_1}). Now, whenever we look at an $(\alpha,\beta)$ pair, we can determine the corresponding optimal values of $k_g^{(1)}$ and $k_g^{(2)}$ using (\ref{eqn:optk1_new}) and (\ref{eqn:optk2_new}), and hence determine the value of Expression (\ref{eqn:twophase_onecamp_1}) by plugging in the optimal $k_g^{(1)}$ and $k_g^{(2)}$ and that $(\alpha,\beta)$ pair. 
The optimal pair $(\alpha,\beta)$ can thus be obtained as the pair that maximizes (\ref{eqn:twophase_onecamp_1}).

Above analysis holds when $k_g^{(1)}+k_g^{(2)}=k_g$.
From Proposition~\ref{prop:allornothing}, we need to consider one more possibility that 
$k_g^{(1)}=k_g^{(2)}=0$, which gives a constant value $\sum_{i \in V} \sum_{j \in V} c_i b_{ji}$ for Expression (\ref{eqn:twophase_onecamp_1}).
Let $(0,0)$ correspond to this additional possibility.
It is hence optimal to invest $k_g^{(1)}$ (obtained using (\ref{eqn:optk1_new})) on node $\alpha$ in the first phase and $k_g^{(2)}$ (obtained using (\ref{eqn:optk2_new})) on node $\beta$ in the second phase, subject to it giving a  value greater than $\sum_{i \in V} \sum_{j \in V} c_i b_{ji}$ to Expression (\ref{eqn:twophase_onecamp_1}).

Since we iterate through $(n^2+1)$ possibilities (namely, $(\alpha,\beta) \in V \times V \cup \{(0,0)\}$), the above procedure gives a polynomial time algorithm for determining the optimal budget split and the optimal investments on nodes in two phases.
It can also be shown that the complexity of this algorithm is dominated by the matrix inversion operation (computation of $\Delta$).

\begin{remark} 
\label{rem:singlecamp}
For non-negative values of parameters,
 (\ref{eqn:optk1_new}) indicates that for a given $(\alpha,\beta)$ pair, the good camp would want to invest more in the first phase for a higher $s_\alpha$. This is intuitive from our understanding of $s_\alpha$ being viewed as the influencing power of node $\alpha$ looking one phase ahead. Similarly,   (\ref{eqn:optk2_new}) indicates that it would want to invest more in  second phase for a higher $r_\beta$, since $r_\beta$ can be viewed as the influencing power of node $\beta$ in the immediate phase. 
Also, (\ref{eqn:optk1_new}) and (\ref{eqn:optk2_new}) indicate that a higher $\theta_\alpha$ drives the camp to invest in  first phase and a higher $\theta_\beta$ drives it to invest in  second phase.
Since $w_{ig}$ is an increasing function of $\theta_i$, this implicitly means that a node with a higher $w_{ig}$ drives the good camp to invest in the phase in which that node is selected.
Further, we  illustrate the role of $w_{ii}^0$ using simulations in Section \ref{sec:ODSNmultiphase_sim}.
\end{remark}

\section{The Case of Competing Camps
}
\label{sec:dep_2camps}

We now analyze the  general scenario  involving the two competing camps.
We first prove the following result.

\begin{proposition}
In the search space $k_g^{(1)}+k_g^{(2)} \in (0,k_g]$,
it is optimal  for the good camp  to have $k_g^{(1)}+k_g^{(2)}=k_g$, 
and to invest on at most one node in each phase.
In the search space $k_b^{(1)}+k_b^{(2)} \in (0,k_b]$,
it is optimal  for the bad camp  to have $k_b^{(1)}+k_b^{(2)}=k_b$, 
and to invest on at most one node in each phase.
\label{prop:allornothing2camps}
\end{proposition}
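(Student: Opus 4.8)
The plan is to mimic the structure of the proof of Proposition~\ref{prop:allornothing}, but now working with the full two-camp objective function~(\ref{eqn:raw2phase2camp}). The key structural observation is the same: if we fix any three of the four strategy vectors $\mathbf{x^{(1)}},\mathbf{x^{(2)}},\mathbf{y^{(1)}},\mathbf{y^{(2)}}$, then $\sum_{i\in V}z_i^{(2)}$ is an \emph{affine} (linear plus constant) function of the remaining one. Indeed, inspecting~(\ref{eqn:raw2phase2camp}), the only place where products of two strategy vectors appear is between a first-phase vector and a second-phase vector (terms like $x_i^{(1)}\frac{\theta_i}{2}(1+c_i)\cdot x_j^{(2)}\frac{\theta_j}{2}b_{ji}$, and the analogous $x^{(1)}y^{(2)}$, $y^{(1)}x^{(2)}$, $y^{(1)}y^{(2)}$ cross terms); there are no $x^{(1)}y^{(1)}$ or $x^{(2)}y^{(2)}$ products. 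Hence, holding the other three vectors fixed, the objective is linear in the free one.

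First I would argue the statement for the good camp. Fix $\mathbf{y^{(1)}},\mathbf{y^{(2)}}$ arbitrarily, and fix a feasible $\mathbf{x^{(2)}}$ with $\sum_{j}x_j^{(2)}=k_g^{(2)}$. Then, collecting the $\mathbf{x^{(1)}}$-dependent part of~(\ref{eqn:raw2phase2camp}), the good camp's utility is of the form $\sum_{i\in V}x_i^{(1)}\,\Gamma_i + (\text{const})$, where $\Gamma_i = \frac{\theta_i}{2}(1+c_i)\big(s_i + \sum_{j\in V}(x_j^{(2)}+y_j^{(2)})\frac{\theta_j}{2}b_{ji}\big)$ is a coefficient that does not depend on $\mathbf{x^{(1)}}$. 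Maximizing a linear functional over the simplex $\{\mathbf{x^{(1)}}\ge 0,\ \sum_i x_i^{(1)}\le k_g-k_g^{(2)}\}$ is achieved by putting all mass $k_g-k_g^{(2)}$ on a single index $\alpha\in\argmax_i \Gamma_i$ when $\Gamma_\alpha>0$ (ties broken arbitrarily), and by $\mathbf{x^{(1)}}=\mathbf 0$ when $\max_i\Gamma_i\le 0$. Symmetrically, fixing $\mathbf{x^{(1)}}$ and using the $\mathbf{x^{(2)}}$-linear form of~(\ref{eqn:raw2phase2camp}) (the coefficient of $x_j^{(2)}$ being $\frac{\theta_j}{2}(r_j+\sum_i c_i b_{ji}+\sum_i x_i^{(1)}\frac{\theta_i}{2}(1+c_i)b_{ji}-\sum_i y_i^{(1)}\frac{\theta_i}{2}(1-c_i)b_{ji})$), the optimal $\mathbf{x^{(2)}}$ puts all its mass on a single node. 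Starting from any feasible $(\mathbf{x^{(1)}},\mathbf{x^{(2)}})$ we alternately replace $\mathbf{x^{(1)}}$ by a single-node best response and $\mathbf{x^{(2)}}$ by a single-node best response; each step is a (weak) improvement and the result uses at most one node per phase. The ``exhaust the budget'' claim follows because within the regime $k_g^{(1)}+k_g^{(2)}\in(0,k_g]$ at least one of the two coefficients being optimized is nonnegative at an interior point — more carefully, if some $\Gamma_i>0$ the good camp strictly prefers to spend the leftover $k_g-k_g^{(1)}-k_g^{(2)}$ in phase~1, and if the phase-2 coefficient is positive for some node it prefers to spend leftover budget in phase~2; the only way neither holds is the degenerate all-coefficients-nonpositive case, which within $(0,k_g]$ is dominated by moving to the boundary. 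I would state this exactly as in Proposition~\ref{prop:allornothing}: it is \emph{an} optimal strategy (not the unique one) to invest on at most one node per phase and to exhaust the budget.

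The argument for the bad camp is entirely symmetric: with $\mathbf{x^{(1)}},\mathbf{x^{(2)}}$ fixed, the objective is affine in $\mathbf{y^{(1)}}$ with coefficient $-\frac{\theta_i}{2}(1-c_i)\big(s_i+\sum_j(x_j^{(2)}+y_j^{(2)})\frac{\theta_j}{2}b_{ji}\big)$ of $y_i^{(1)}$, and the bad camp \emph{minimizes} $\sum_i z_i^{(2)}$, i.e.\ maximizes a linear functional in $\mathbf{y^{(1)}}$, so again a single-node concentration is optimal; likewise for $\mathbf{y^{(2)}}$. Iterating single-node best responses for the bad camp's two vectors gives the claim.

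The main obstacle is not the linearity bookkeeping (that is routine, given the explicit form~(\ref{eqn:raw2phase2camp})) but making the alternating-improvement / ``iteratively improve'' argument airtight: one must be slightly careful that replacing $\mathbf{x^{(1)}}$ by its single-node best response and then replacing $\mathbf{x^{(2)}}$ by its single-node best response does not undo the first improvement, and that the process terminates at a profile with the stated support structure without needing convergence of an infinite sequence. This is handled exactly as in Proposition~\ref{prop:allornothing}: one does not iterate to a fixed point, but simply observes that \emph{given} any candidate optimal strategy of the opponent and \emph{any} fixed second-phase (resp.\ first-phase) own-investment, a single-node own-investment in the other phase is a best response, so some global optimum already has the at-most-one-node-per-phase form; since the opponent vectors were held fixed and arbitrary throughout, this reasoning is valid regardless of what the opponent does, hence it holds at any Nash equilibrium and, in particular, the claimed structural restriction is without loss of optimality for each camp. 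The budget-exhaustion half is then a one-line monotonicity remark as above. I would conclude by noting, as the paper does for the single-camp case, that $k_g^{(1)}=k_g^{(2)}=0$ (resp.\ $k_b^{(1)}=k_b^{(2)}=0$) must be checked separately as the boundary case excluded from the open search space $(0,k_g]$.
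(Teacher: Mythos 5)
Your proposal is correct and follows essentially the same route as the paper: you establish that the objective (\ref{eqn:raw2phase2camp}) is multilinear (affine in each of $\mathbf{x^{(1)}},\mathbf{x^{(2)}},\mathbf{y^{(1)}},\mathbf{y^{(2)}}$ with the other three fixed), read off the same coefficients the paper computes, and then reduce to the vertex-of-a-simplex argument of Proposition~\ref{prop:allornothing}, with the bad camp handled by negating its coefficients since it minimizes. Your explicit handling of the alternating best-response step and of the boundary case $k_g^{(1)}=k_g^{(2)}=0$ just spells out details the paper delegates to "the rest follows as in Proposition~\ref{prop:allornothing}."
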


\begin{proof}
We  show that $\sum_{i \in V} z_i^{(2)}$ is a multilinear function, since it can be written as a linear function  in $\mathbf{x^{(1)}},\mathbf{y^{(1)}},\mathbf{x^{(2)}},\mathbf{y^{(2)}}$ individually, as follows:

%

\begin{small}
\vspace{-2mm}
\begin{align*}
\nonumber 
\sum_{i \in V} z_i^{(2)} 
&= 
 \sum_{i \in V} \sum_{j \in V} c_i b_{ji} 
  +   \sum_{j \in V} x_j^{(2)} \frac{\theta_j}{2} \Big( \sum_{i \in V}  c_i b_{j i}   +   r_j \Big)
  +   \sum_{j \in V} y_j^{(2)} \frac{\theta_j}{2} \Big( \sum_{i \in V} c_i b_{j i}   -   r_j \Big)
\\ 
&\;\;\;\;\;\;\;\;\;\;\;\;\;\;\;\;\;\;\;\;\;\;\;\;\;\;\;\;
\nonumber
+ \sum_{i \in V} x_i^{(1)}  \frac{\theta_i}{2} (1   +   c_i) \Big( s_{i}
  +   \sum_{j \in V} x_j^{(2)} \frac{\theta_j}{2} b_{j i}
  +   \sum_{j \in V} y_j^{(2)} \frac{\theta_j}{2} b_{j i} \Big)
\\ 
&\;\;\;\;\;\;\;\;\;\;\;\;\;\;\;\;\;\;\;\;\;\;\;\;\;\;\;\;
- \sum_{i \in V} y_i^{(1)} \frac{\theta_i}{2} (1   -   c_i) \Big( s_i 
  +   \sum_{j \in V} x_j^{(2)} \frac{\theta_j}{2} b_{j i}
  +   \sum_{j \in V} y_j^{(2)} \frac{\theta_j}{2} b_{j i} \Big)
  \\
  &=
  \sum_{i \in V} x_i^{(1)}  \frac{\theta_i}{2} (1   +   c_i) \Big( s_{i}
    +   \sum_{j \in V} x_j^{(2)} \frac{\theta_j}{2} b_{j i}
    +   \sum_{j \in V} y_j^{(2)} \frac{\theta_j}{2} b_{j i} \Big)
    \\
    &\;\;\;\;\;\;\;\;\;\;\;\;\;\;\;\;\;\;\;\;\;\;\;\;\;\;\;\;\;\;\;\;\;\;\;\;\;\;\;\;\;\;\;\;\;\;\;\;\;\;\;\;\;\;\;\;\;\;\;\;\;\;\;\;\;\;\;\;\;\;\;\;\;\;
    + \text{(terms independent of $x_i^{(1)}$)}
    \\
    &=
    - \sum_{i \in V} y_i^{(1)} \frac{\theta_i}{2} (1   -   c_i) \Big( s_i 
      +   \sum_{j \in V} x_j^{(2)} \frac{\theta_j}{2} b_{j i}
      +   \sum_{j \in V} y_j^{(2)} \frac{\theta_j}{2} b_{j i} \Big)
      \\
      &\;\;\;\;\;\;\;\;\;\;\;\;\;\;\;\;\;\;\;\;\;\;\;\;\;\;\;\;\;\;\;\;\;\;\;\;\;\;\;\;\;\;\;\;\;\;\;\;\;\;\;\;\;\;\;\;\;\;\;\;\;\;\;\;\;\;\;\;\;\;\;\;\;\;
      + \text{(terms independent of $y_i^{(1)}$)}
      \\
      &=
      \sum_{j \in V} x_j^{(2)} \frac{\theta_j}{2} \Big( r_j +  \sum_{i \in V}  c_i b_{j i}   +   \sum_{i \in V} x_i^{(1)}  \frac{\theta_i}{2} (1   +   c_i) b_{ji} - \sum_{i \in V} y_i^{(1)} \frac{\theta_i}{2} (1   -   c_i) b_{ji} \Big)
      \\
&\;\;\;\;\;\;\;\;\;\;\;\;\;\;\;\;\;\;\;\;\;\;\;\;\;\;\;\;\;\;\;\;\;\;\;\;\;\;\;\;\;\;\;\;\;\;\;\;\;\;\;\;\;\;\;\;\;\;\;\;\;\;\;\;\;\;\;\;\;\;\;\;\;\;
      + \text{(terms independent of $x_j^{(2)}$)}
      \\
      &=
      - \sum_{j \in V} y_j^{(2)} \frac{\theta_j}{2} \Big( r_j - \sum_{i \in V} c_i b_{j i}  -   \sum_{i \in V} x_i^{(1)}  \frac{\theta_i}{2} (1   +   c_i) b_{ji} + \sum_{i \in V} y_i^{(1)} \frac{\theta_i}{2} (1   -   c_i) b_{ji} \Big)
      \\
      &\;\;\;\;\;\;\;\;\;\;\;\;\;\;\;\;\;\;\;\;\;\;\;\;\;\;\;\;\;\;\;\;\;\;\;\;\;\;\;\;\;\;\;\;\;\;\;\;\;\;\;\;\;\;\;\;\;\;\;\;\;\;\;\;\;\;\;\;\;\;\;\;\;\;
            + \text{(terms independent of $y_j^{(2)}$)}
\end{align*}
\vspace{-3mm}
\end{small}

The rest of the proof follows on similar lines as Proposition \ref{prop:allornothing}.
(Note the negative signs assigned to the coefficients $y_i^{(1)}$ and $y_j^{(2)}$ since the bad camp computes the values of these parameters so as to minimize $\sum_{i \in V} z_i^{(2)}$).
\end{proof}

From Proposition \ref{prop:allornothing2camps}, there exist optimal vectors $\mathbf{x^{(1)}},\mathbf{x^{(2)}}$ for  good camp and optimal vectors $\mathbf{y^{(1)}},\mathbf{y^{(2)}}$ for  bad camp, 
{such that $x_{\alpha}^{(1)}  =  k_g^{(1)}, x_{\beta}^{(2)}  =  k_g^{(2)} , y_{\gamma}^{(1)}  =  k_b^{(1)} , y_{\delta}^{(2)}  =  k_b^{(2)}$, and} {$x_{i \neq \alpha}^{(1)}  =  x_{j \neq \beta}^{(2)}  =  y_{i \neq \gamma}^{(1)}  =  y_{j \neq \delta}^{(2)}  =  0$.
Assuming such profile of nodes} $((\alpha,\beta),(\gamma,\delta))$, we first find 
$((x_{\alpha}^{(1)}, x_{\beta}^{(2)}),( y_{\gamma}^{(1)}, y_{\delta}^{(2)}))$, or equivalently, 
the optimal $((k_g^{(1)}, k_g^{(2)}) , (k_b^{(1)} , k_b^{(2)}))$ corresponding to such a profile.
By incorporating $((\alpha,\beta),(\gamma,\delta))$, Expression (\ref{eqn:raw2phase2camp}) for $\sum_{i \in V} z_i^{(2)}$ simplifies to

\begin{small}
\vspace{-3mm}
\begin{align}
\nonumber
\sum_{i \in V} \sum_{j \in V} & c_i b_{ji} 
+  k_g^{(2)} \frac{\theta_\beta}{2} \Big( \sum_{i \in V}  c_i b_{\beta i} + r_\beta\Big)
 +  k_b^{(2)} \frac{\theta_\delta}{2} \Big( \sum_{i \in V} c_i b_{\delta i} -r_\delta\Big)
\\&
\nonumber
+ k_g^{(1)}  \frac{\theta_\alpha}{2} (1+c_\alpha) \Big( s_{\alpha} 
+ k_g^{(2)} \frac{\theta_\beta}{2} b_{\beta \alpha}
+ k_b^{(2)} \frac{\theta_\delta}{2} b_{\delta \alpha} \Big)
\\&\;
- k_b^{(1)} \frac{\theta_\gamma}{2} (1-c_\gamma) \Big( s_\gamma 
+ k_g^{(2)} \frac{\theta_\beta}{2} b_{\beta \gamma}
+ k_b^{(2)} \frac{\theta_\delta}{2} b_{\delta \gamma} \Big)
\label{eqn:twophase_twocamp_new}
\end{align}
\vspace{-2mm}
\end{small}

First, we consider the case when $k_g^{(1)} + k_g^{(2)} = k_g$ and $k_b^{(1)} + k_b^{(2)} = k_b$.
Now, for a given profile of nodes $((\alpha,\beta),(\gamma,\delta))$, we will find the optimal values of $k_g^{(1)},k_g^{(2)},k_b^{(1)},k_b^{(2)}$. In this case, we have $k_g^{(2)} = k_g - k_g^{(1)}$ and $k_b^{(2)} = k_b - k_b^{(1)}$. 
Substituting this in  (\ref{eqn:twophase_twocamp_new}) and 
equating $\frac{\partial \sum_{i \in V} z_i^{(2)} }{\partial k_g^{(1)}} = 0$, we get

\begin{small}
\vspace{-3mm}
\begin{align}
\nonumber
  - \frac{\theta_\beta}{2} \Big( \sum_{i \in V}  c_i b_{\beta i} + r_\beta\Big)
  + \frac{\theta_\alpha}{2} (1+c_\alpha) \Big( s_{\alpha} + (k_g - 2 k_g^{(1)} ) \frac{\theta_\beta}{2} b_{\beta \alpha} + (k_b - k_b^{(1)} ) \frac{\theta_\delta}{2} b_{\delta \alpha} \Big) 
  \\
 + k_b^{(1)} \frac{\theta_\gamma}{2} (1-c_\gamma) \frac{\theta_\beta}{2} b_{\beta \alpha} = 0
 \nonumber
\end{align}
\vspace{-2mm}
\end{small}

Let $A = \theta_\gamma \theta_\delta (1-c_\gamma) b_{\delta \gamma}$, $D = \theta_\alpha \theta_\beta  (1   +   c_\alpha)b_{\beta \alpha}$, and \\$B = \frac{1}{2} \left( \theta_\alpha \theta_\delta (1+c_\alpha)b_{\delta \alpha} - \theta_\gamma \theta_\beta (1-c_\gamma)b_{\beta \gamma} \right)$. So, the above  can be written as

\begin{small}
\vspace{-3mm}
\begin{align}
D k_g^{(1)} + B k_b^{(1)}
=  - \theta_\beta \Big( \sum_{i \in V}  c_i b_{\beta i} + r_\beta\Big) + \theta_\alpha (1   +   c_\alpha) \Big( s_{\alpha} + k_g  \frac{\theta_\beta}{2} b_{\beta \alpha} + k_b   \frac{\theta_\delta}{2} b_{\delta \alpha} \Big) 
\label{eqn:dv_by_dkg1}
\end{align}
\vspace{-2mm}
\end{small}

Similarly, equating $\frac{\partial \sum_{i \in V} z_i^{(2)} }{\partial k_b^{(1)}} = 0$, we get

\begin{small}
\vspace{-3mm}
\begin{align}
B k_g^{(1)} - A k_b^{(1)}
=  - \theta_\delta \Big( \sum_{i \in V}  c_i b_{\delta i} - r_\delta\Big) - \theta_\gamma (1   -   c_\gamma) \Big( s_{\gamma} + k_g  \frac{\theta_\beta}{2} b_{\beta \gamma} + k_b   \frac{\theta_\delta}{2} b_{\delta \gamma} \Big) 
\label{eqn:dv_by_dkb1}
\end{align}
\vspace{-2mm}
\end{small}

Solving Equations (\ref{eqn:dv_by_dkg1}) and (\ref{eqn:dv_by_dkb1}) simultaneously, we obtain the candidate   $k_g^{(1)}$ for profile $((\alpha,\beta),(\gamma,\delta))$  to be

\vspace{-4mm}
\begin{small}
\begin{align*}
\hspace{-3mm}
\nonumber
\frac{1}{ B^2   +   D A}
\Big[
s_\alpha \theta_\alpha (1   +   c_\alpha)A 
  -   r_\beta \theta_\beta A
  -   s_\gamma \theta_\gamma (1   -   c_\gamma)B
   +   r_\delta \theta_\delta B
\\
\nonumber
+ k_g\Big(\frac{\theta_\alpha\theta_\beta}{2} (1+c_\alpha)b_{\beta\alpha}A - \frac{\theta_\gamma\theta_\beta}{2} (1-c_\gamma)b_{\beta\gamma}B\Big)
\\
\nonumber
+ k_b \Big(\frac{\theta_\alpha\theta_\delta}{2} (1+c_\alpha)b_{\delta\alpha}A - \frac{\theta_\gamma\theta_\delta}{2} (1-c_\gamma)b_{\delta\gamma}B\Big)
\\
-\theta_\beta A \sum_{i \in V} c_i b_{\beta i} - \theta_\delta B \sum_{i \in V} c_i b_{\delta i}
\Big]
\end{align*}
\vspace{-2mm}
\end{small}

We can similarly obtain the candidate $k_b^{(1)}$  for profile $((\alpha,\beta),(\gamma,\delta))$  to be

\vspace{-4mm}
\begin{small}
\begin{align*}
\hspace{-3mm}
\nonumber
\frac{1}{ B^2   +   D A}
\Big[
s_\alpha \theta_\alpha (1   +   c_\alpha)B 
  -   r_\beta \theta_\beta B
  +   s_\gamma \theta_\gamma (1   -   c_\gamma)D
   -   r_\delta \theta_\delta D
\\
\nonumber
+ k_g\Big(\frac{\theta_\alpha\theta_\beta}{2} (1+c_\alpha)b_{\beta\alpha}B + \frac{\theta_\gamma\theta_\beta}{2} (1-c_\gamma)b_{\beta\gamma}D \Big)
\\
\nonumber
+ k_b \Big(\frac{\theta_\alpha\theta_\delta}{2} (1+c_\alpha)b_{\delta\alpha}B + \frac{\theta_\gamma\theta_\delta}{2} (1-c_\gamma)b_{\delta\gamma}D \Big)
\\
-\theta_\beta B \sum_{i \in V} c_i b_{\beta i} + \theta_\delta D \sum_{i \in V} c_i b_{\delta i}
\Big]
\end{align*}
\vspace{-4mm}
\end{small}

If  second derivative w.r.t. $k_g^{(1)}$, that is,
$- {\theta_\alpha} {\theta_\beta} r_\beta w_{\beta\beta}^0 \Delta_{\beta\alpha}  (1+w_{\alpha \alpha}^0 z_\alpha^0)  < 0$
and that w.r.t. $k_b^{(1)}$, that is,
${\theta_\gamma} {\theta_\delta} r_\delta w_{\delta\delta}^0 \Delta_{\delta\gamma}  (1-w_{\gamma \gamma}^0 z_\gamma^0) > 0$, 
and the obtained solution is such that $k_g^{(1)}  \in  [0,k_g]$ and $k_b^{(1)}  \in  [0,k_b]$, then neither the good camp can change $k_g^{(1)}$ to increase $\sum_{i \in V} z_i^{(2)}$, nor the bad camp can change $k_b^{(1)}$ to decrease $\sum_{i \in V} z_i^{(2)}$.
So we can effectively write $u_g((\mathbf{x^{(1)}},\mathbf{x^{(2)}}),(\mathbf{y^{(1)}},\mathbf{y^{(2)}}))$
as $u_g((\alpha,\beta),(\gamma,\delta))$, where $u_g((\alpha,\beta),(\gamma,\delta))$ is the value of $\sum_{i \in V} z_i^{(2)}$,
which corresponds to the strategy profile where  good camp invests on nodes $(\alpha,\beta)$ with  optimal budget split $(k_g^{(1)},k_g^{(2)})$, and  bad camp invests on nodes $(\gamma,\delta)$ with  optimal budget split $(k_b^{(1)},k_b^{(2)})$.

For general case where the solution $k_g^{(1)},k_b^{(1)}$ obtained above may not satisfy $k_g^{(1)} \in [0,k_g]$ and $k_b^{(1)} \in [0,k_b]$, we make practically reasonable assumptions so as to determine $u_g((\alpha,\beta),(\gamma,\delta))$.
We assume that $w_{ij} \geq 0, \forall (i,j)$ and $w_{ii}^0 \geq 0,\theta_i \geq 0, z_i^0 \in [-1,1], \forall {i \in V}$.
For the fund collection example, 
these assumptions respectively mean that  nodes do not distrust each other, they do not weigh their own biases negatively (which is trivially true in the real world), they attribute a non-negative weightage to being influenced due to campaigning as a whole, and they are not excessively biased towards any particular camp before the campaigning and opinion dynamics begin.
Now, it is easy to show that if $w_{ij} \geq 0, \forall (i,j)$, then $\Delta_{ij} \geq 0, \forall (i,j)$ and $r_i \geq 1, \forall {i \in V}$.
So if we assume  $w_{ij} \geq 0, \forall (i,j)$ and $w_{ii}^0 \geq 0,\theta_i \geq 0, z_i^0 \in [-1,1], \forall {i \in V}$,
we would have that 
$- {\theta_\alpha} {\theta_\beta} r_\beta w_{\beta\beta}^0 \Delta_{\beta\alpha}  (1+w_{\alpha \alpha}^0 z_\alpha^0)  \leq 0$
and ${\theta_\gamma} {\theta_\delta} r_\delta w_{\delta\delta}^0 \Delta_{\delta\gamma}  (1-w_{\gamma \gamma}^0 z_\gamma^0) \geq 0$.
That is, we would have $\sum_{i \in V} z_i^{(2)}$ to be a convex-concave function, which is concave w.r.t. $k_g^{(1)}$ and convex w.r.t. $k_b^{(1)}$.
So in the domain $([0,k_g],[0,k_b])$, we can find a $(k_g^{(1)},k_b^{(1)})$ such that,
neither the good camp can change $k_g^{(1)}$ to increase $\sum_{i \in V} z_i^{(2)}$, nor the bad camp can change $k_b^{(1)}$ to decrease $\sum_{i \in V} z_i^{(2)}$ \cite{boyd2004convex,arrow1958studies}.
So we can assign this value $\sum_{i \in V} z_i^{(2)}$ to $u_g((\alpha,\beta),(\gamma,\delta))$.

Thus using  above technique, we obtain  $u_g((\alpha,\beta),(\gamma,\delta))$ 
{for all profiles of nodes $((\alpha,\beta),(\gamma,\delta))$ when
$k_g^{(1)}  +  k_g^{(2)}  =  k_g$} and $k_b^{(1)}  +  k_b^{(2)}  =  k_b$.
From Proposition~\ref{prop:allornothing2camps}, the only other cases to  consider are $k_b^{(1)}  =  k_b^{(2)}  =  0$ and $k_g^{(1)}  =  k_g^{(2)}  =  0$.
Let the profile $((\alpha,\beta),(0,0))$ correspond to $k_b^{(1)}  =  k_b^{(2)}  =  0$.
Note that when  $k_b^{(1)}  =  k_b^{(2)}  =  0$, it reduces to non-competitive case with only  good camp (Section \ref{sec:dep_onecamp}); the value of $\sum_{i \in V} z_i^{(2)}$ for an $(\alpha,\beta)$ pair can hence be assigned to $u_g((\alpha,\beta),(0,0))$.
Thus we can obtain $u_g((\alpha,\beta),(\gamma,\delta))$ for all profiles of nodes $((\alpha,\beta),(0,0))$.
Similarly, we can obtain $u_g((\alpha,\beta),(\gamma,\delta))$ for all profiles of nodes $((0,0),(\gamma,\delta))$.
And from Equation (\ref{eqn:twophase_twocamp_new}), $u_g((0,0),(0,0)) = \sum_{i \in V} \sum_{j \in V} c_i b_{ji} $.

So we have that the good camp has $(n^2+1)$ possible pure strategies to choose from, namely, $(\alpha,\beta) \in V \times V \cup \{(0,0)\}$.
Similarly, the bad camp has $(n^2+1)$ possible pure strategies to choose from, namely, $(\gamma,\delta) \in V \times V \cup \{(0,0)\}$.
We thus have a two-player zero-sum game, for which the utilities of the players can be computed for each strategy profile $((\alpha,\beta),(\gamma,\delta))$ as explained above.
Though we cannot ensure the existence of a pure strategy Nash equilibrium, the finiteness of the number of strategies ensures the existence of a mixed strategy Nash equilibrium. Further, owing to it being a two-player zero-sum game, the Nash equilibrium can be found efficiently by solving a linear program
\cite{osborne2004introduction}.

Summarizing, 
under practically reasonable assumptions
($w_{ij} \geq 0, \forall (i,j)$ and $w_{ii}^0 \geq 0,\theta_i \geq 0, z_i^0 \in [-1,1], \forall {i \in V}$),
we transformed the problem into a two-player zero-sum game with each player having $(n^2+1)$ pure strategies,
and showed how the players' utilities can be computed for each strategy profile.
We
thus deduced the existence of Nash equilibrium and that it can be found efficiently
using linear programming.

\vspace{-3mm}
\section{Simulations and Results
\vspace{-1mm}
}
\label{sec:ODSNmultiphase_sim}

For determining implications of our analytical results on real-world network datasets,
we conducted simulations on 
 NetHEPT dataset (an academic collaboration network obtained from co-authorships in the ``High Energy Physics - Theory'' papers published on the e-print arXiv from 1991 to 2003)
consisting of 15,233 nodes and 31,376 edges.
It is widely used for experimental justifications in the literature on opinion adoption
\cite{kempe2003maximizing,chen2009efficient,chen2010scalable}.
For the purpose of graphical illustration as well as running the computationally intensive algorithm for determining Nash equilibrium for the case of competing camps,
we use the popular Zachary's Karate club dataset consisting of 34 nodes and 78 edges \cite{zachary1977information}.
In all of our simulations, 
we assume the value of  $w_{ii}^0$ to be the same for all nodes, in order to systematically study the effect of this value.

While our model of opinion dynamics (which is an extension of Friedkin-Johnsen model) is based on an iterative process using its update rule, we have arrived at closed form expressions for computing $\mathbf{z^{(1)}}$ and $\mathbf{z^{(2)}}$, and hence $\sum_{i\in V} z_i^{(1)}$ and $\sum_{i\in V} z_i^{(2)}$.
So instead of running the opinion dynamics until convergence for arriving at their converged values, we compute the values of the corresponding closed form expressions.
Our approaches for finding the  nodes to be invested on in the two phases, in both the non-competitive and competitive cases, involve iterating over node tuples for computing the corresponding budget splits and the value of $\sum_{i\in V} z_i^{(2)}$. Hence, our algorithms have adequate scope for parallel computing.
We implemented it using MATLAB and ran on an 8-threaded Intel Core i7 2.8 GHz processor. 
For given values of $v_i^0$ and $w_{ii}^0$, the running time of our algorithm  for the NetHEPT dataset (non-competitive case) was around 3 hours.
For the Karate dataset, the running time for the non-competitive case was less than 0.1 seconds, while that for the competitive case was around 1 minute.

\vspace{-1mm}
\subsection{Simulation Results: The Non-Competitive (Single Camp) Case}

For the single camp case, we consider $z_i^0 = 0, \forall {i \in V}$ (unless specified otherwise) to start with a neutral network. This would help us to  reliably study the effects of critical parameters $w_{ii}^0,r_i$, and hence also $s_j$.
Note that $z_i^0 = 0, \forall {i \in V}$ would mean that $w_{ig}^{(1)} = w_{ib}^{(1)}$ (from (\ref{eqn:WonV})), that is, we start with an unbiased population.

\begin{figure}[h!]
\centering
\includegraphics[width=0.6\textwidth]{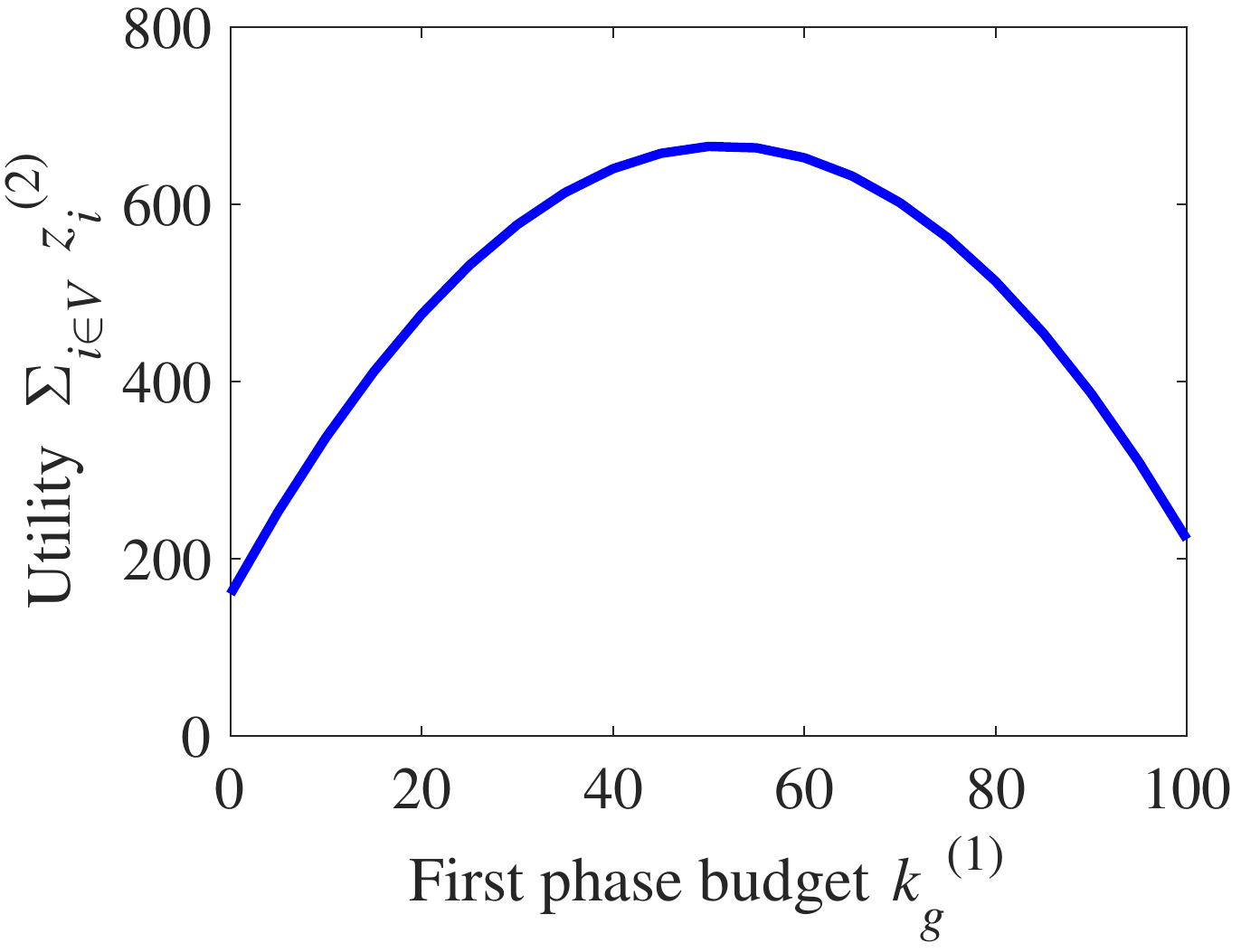}
\vspace{-3mm}
\caption{
The effect of
different budget splits on the good camp's utility (NetHEPT) with $k_g=100$ for $w_{ii}^0 = 0.5, \forall i \in V$ ($z_i^0=0, \forall i \in V$)}
\label{fig:budgetsplits}
\vspace{-2mm}
\end{figure}

Figure \ref{fig:budgetsplits}
shows how the good camp's utility $\sum_{i\in V} z_i^{(2)}$  changes as it increases the budget allotment for the first phase (while decreasing the budget allotment for the second phase) for NetHEPT dataset with $k_g=100$ for $w_{ii}^0 = 0.5, \forall i \in V$.
A natural tradeoff is observed in the plot since 
investing heavily in either phase  results in a low utility; the optimal budget allocation for the first phase is thus an intermediate value. The tradeoff arises since a lower investment in the first phase results in worse initial biases for the second phase in the network, thus resulting a poorer opinion values to start with in the second phase, and also lower effectiveness of its investment in the second phase. On the other hand, a higher investment in the first phase spares a lower available budget for  second phase, resulting in the camp being unable to fully harness the influenced biases. 

\vspace{-2mm}
\subsubsection{The effect of $w_{ii}^0$}

\begin{figure}[h!]
\centering
\includegraphics[width=0.6\textwidth]{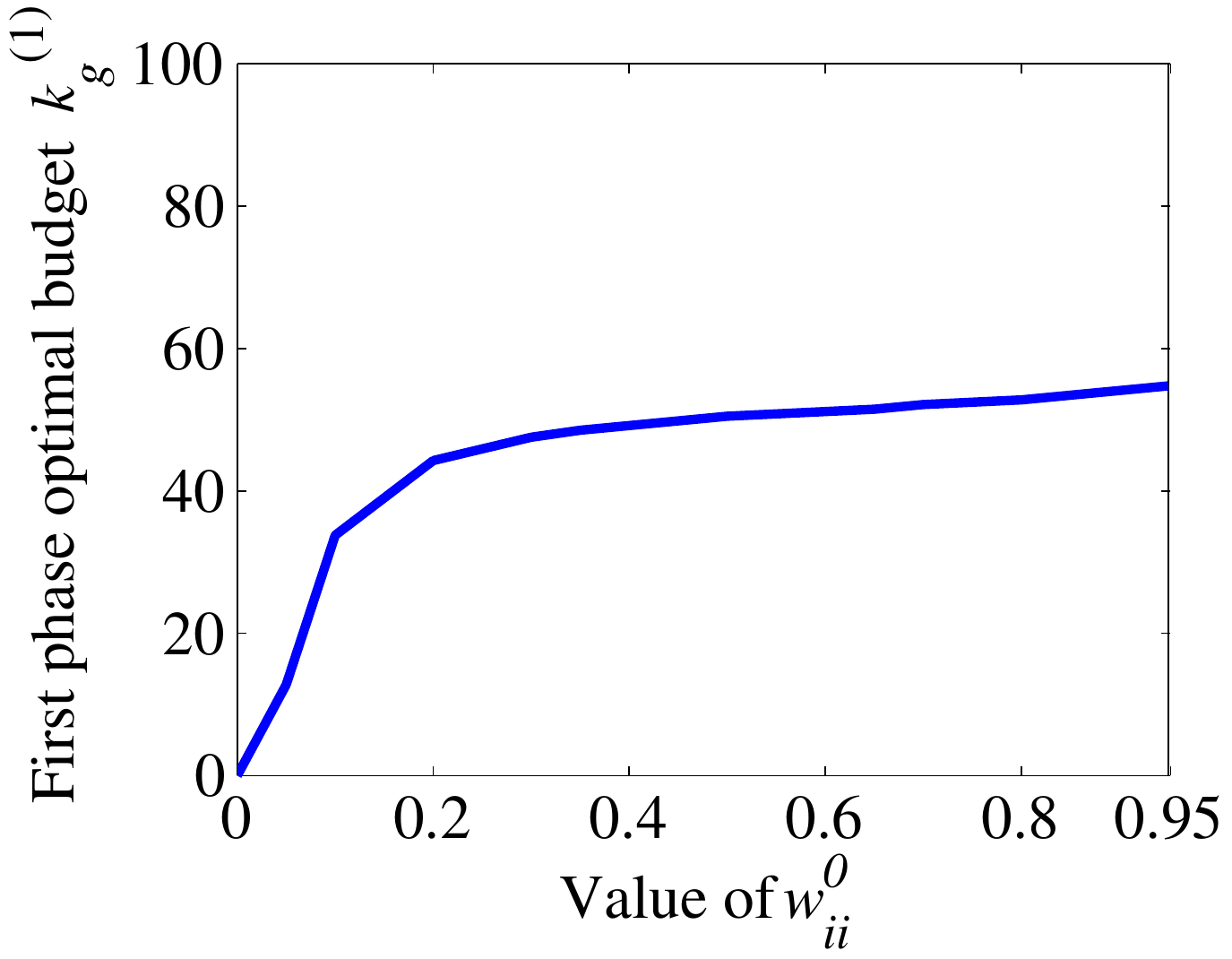}
\vspace{-3mm}
\caption{
The effect of
 $w_{ii}^0$ (NetHEPT) with $k_g=100$ ($z_i^0=0, \forall i \in V$)}
\label{fig:dependency}
\vspace{-2mm}
\end{figure}

Figure \ref{fig:dependency}
presents the optimal budget that should be allotted for the first phase as a function of $w_{ii}^0$.
In our simulations, the optimal values obtained are such that $k_g^{(2)} = k_g - k_g^{(1)}$.
For low values of $w_{ii}^0$,
the optimal strategy is to invest almost entirely in the second phase. This is because the effect of the first phase diminishes in the second phase when $w_{ii}^0$ is low. 
Remark~\ref{rem:singlecamp} states that a high $s_j$ value (influencing power of $j$ looking one phase ahead) would attract high investment in the first phase.
The value $s_j  =  \sum_{i \in V} r_i w_{ii}^0 \Delta_{ij}$ would be significant only if $j$ influences nodes $i$ with significant values of $w_{ii}^0$. 
With low $w_{ii}^0$, we are less likely to have node $j$ with high value of $s_j$ since it requires it to be influential towards significant number of nodes $i$ with significant values of $w_{ii}^0$.
Hence, allotting a significant budget for the first phase is advantageous only if we have nodes with  significant value of $w_{ii}^0$.

\subsubsection{The effect of $z_i^0$}

To supplement the above observation,
we  observed the effects of $z_i^0\neq 0$ (initially biased network).
These plots did not differ noticeably 
from the one in Figure~\ref{fig:dependency}, 
even for higher magnitudes of $z_i^0$.
There are subtle differences, however.
If $z_i^0$'s are  positive, the good camp invests less in the first phase. This is  because $z_i^0$'s already give a  head start for a healthy value at the end of the first phase (which is bias for the second phase), and so the budget could rather be invested in the second phase.
On the other hand, if $z_i^0$'s are  negative, it invests more in the first phase so as to nullify the initial disadvantage.

\subsubsection{The effect of camp being myopic}

\begin{figure}[h!]
\centering
\begin{tabular}{c}
\hspace{-5mm}
\includegraphics[width=0.6\textwidth]{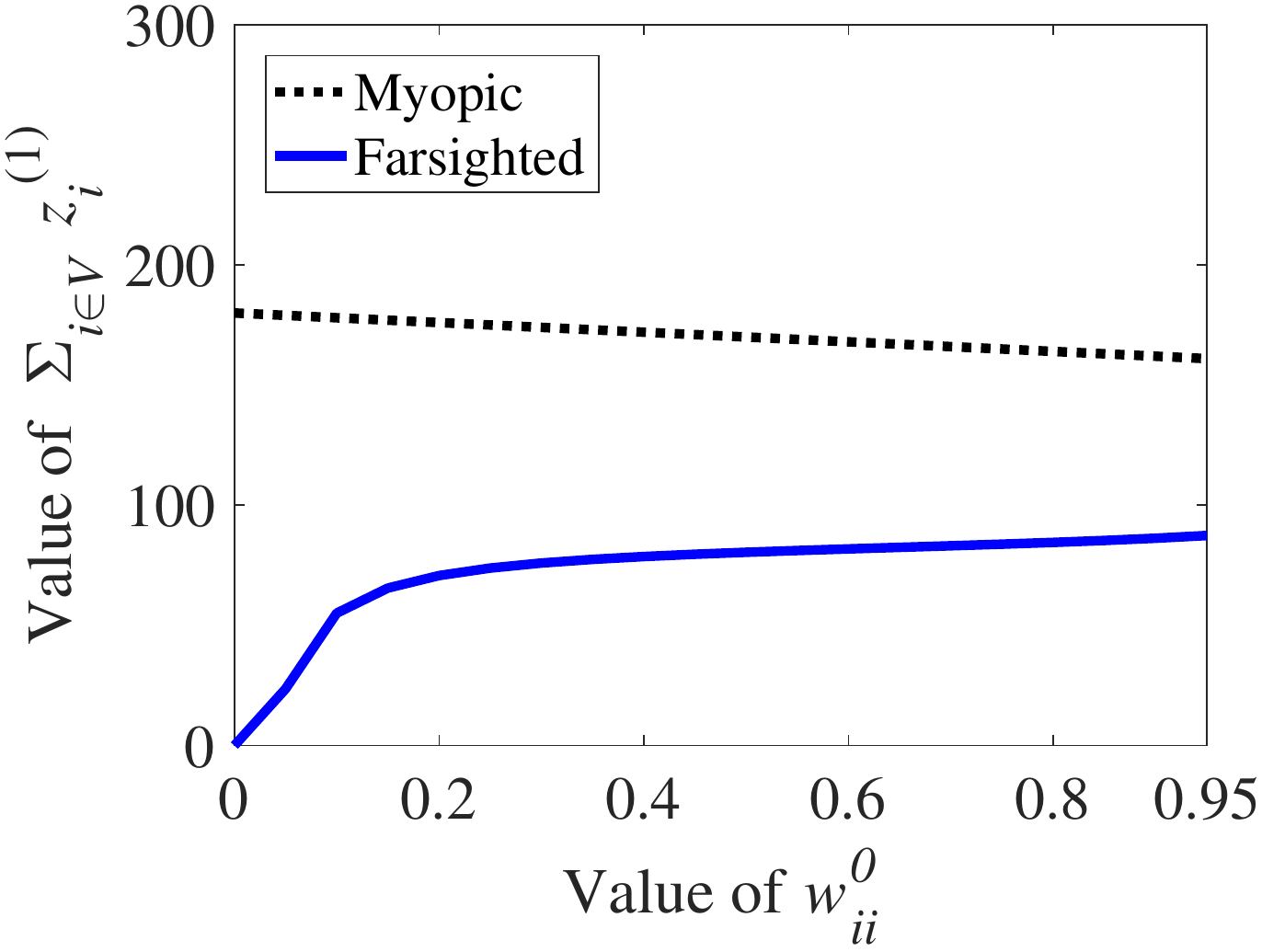}
\\
(a) at the end of the first phase
\vspace{5mm}
\\
\includegraphics[width=0.6\textwidth]{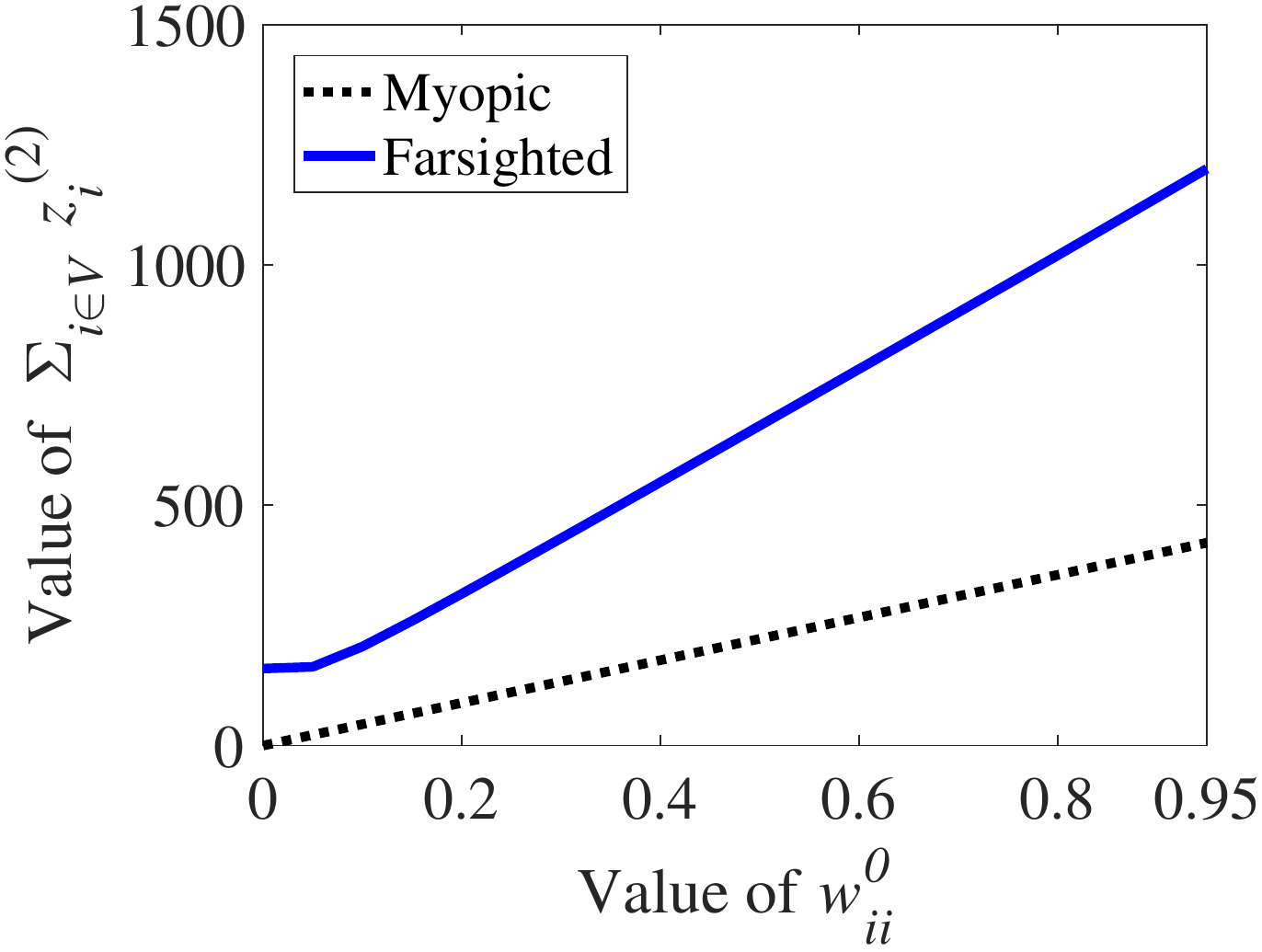}
\\
(b) at the end of the second phase
\end{tabular}

\caption{
The effect of using myopic investment strategy
for different values of $w_{ii}^0$ (NetHEPT) with $k_g=100$ ($z_i^0=0, \forall i \in V$)}
\label{fig:myopic}
\vspace{-2mm}
\end{figure}

We generally consider that the camp is
{\em farsighted\/}, that is, it computes its strategy considering that there would be a second phase, that is, it considers the objective function $\sum_{i\in V} z_i^{(2)}$.
A camp can be called {\em myopic\/} if it computes its strategy greedily by considering the near-sighted objective function $\sum_{i\in V} z_i^{(1)}$.
In other words, when the camp is myopic,
it perceives its utility as
$\sum_{i\in V} z_i^{(1)}$ and devises its strategy to invest greedily in the first phase, even
though its actual utility is $\sum_{i\in V} z_i^{(2)}$.

Figure \ref{fig:myopic} shows 
the effect of the good camp using myopic strategy (investing the entire budget in the first phase)
over  the wide range of $w_{ii}^0$ values for NetHEPT dataset with $k_g=100$.
It is clear that the myopic strategy would result in a higher value of $\sum_{i\in V} z_i^{(1)}$ than that achieved using the farsighted strategy, since the myopic one invests the entire budget in the first phase, while the farsighted one invests as per Figure~\ref{fig:dependency}.
This can be seen from Figure~\ref{fig:myopic}(a).
In Figure~\ref{fig:myopic}(b),
when $w_{ii}^0=0$, the myopic strategy results in zero utility 
as it invests its entire budget in the first phase (which plays no role when $w_{ii}^0=0$), while the  farsighted strategy suggests the camp to invest its entire budget in the second phase (which is why it results in $\sum_{i\in V} z_i^{(1)} = 0$ when $w_{ii}^0=0$ in Figure~\ref{fig:myopic}(a)).
Figure~\ref{fig:myopic}(b) also suggests that the loss incurred by playing myopic strategy could be considerably high for high values of $w_{ii}^0$. This emphasizes  that, though high $w_{ii}^0$'s are suitable for high investments in the first phase so as to influence the biases for the second phase (as seen in Figure~\ref{fig:dependency}), it is important to spare a certain fraction of the budget so as to be invested in the second phase in order to harness the influenced biases.

\vspace{-2mm}
\subsubsection{Phasewise progression of opinion values}
\label{sec:phasewise}

In order to illustrate the phasewise progression of opinion values of nodes, we use the small-sized Zachary's Karate club dataset  for  visualization.
Figure~\ref{fig:Ri_Si} 
shows the computed values of $s_i$ and $r_i$ for the nodes in the dataset. 
The size and color saturation of a node $i$ represent the value of the corresponding parameter (bigger size and higher saturation implies higher value).
For this dataset, the investment was made on a single node common to both the phases. This node  visibly stands out in Figures \ref{fig:phasewise_0.5} and \ref{fig:phasewise_0.9} with its size and color saturation;
we refer to this as the prime node in our discussion.

\begin{figure}[h!]
\vspace{2mm}
\centering
\begin{tabular}{cc}
\includegraphics[width=0.5\textwidth]{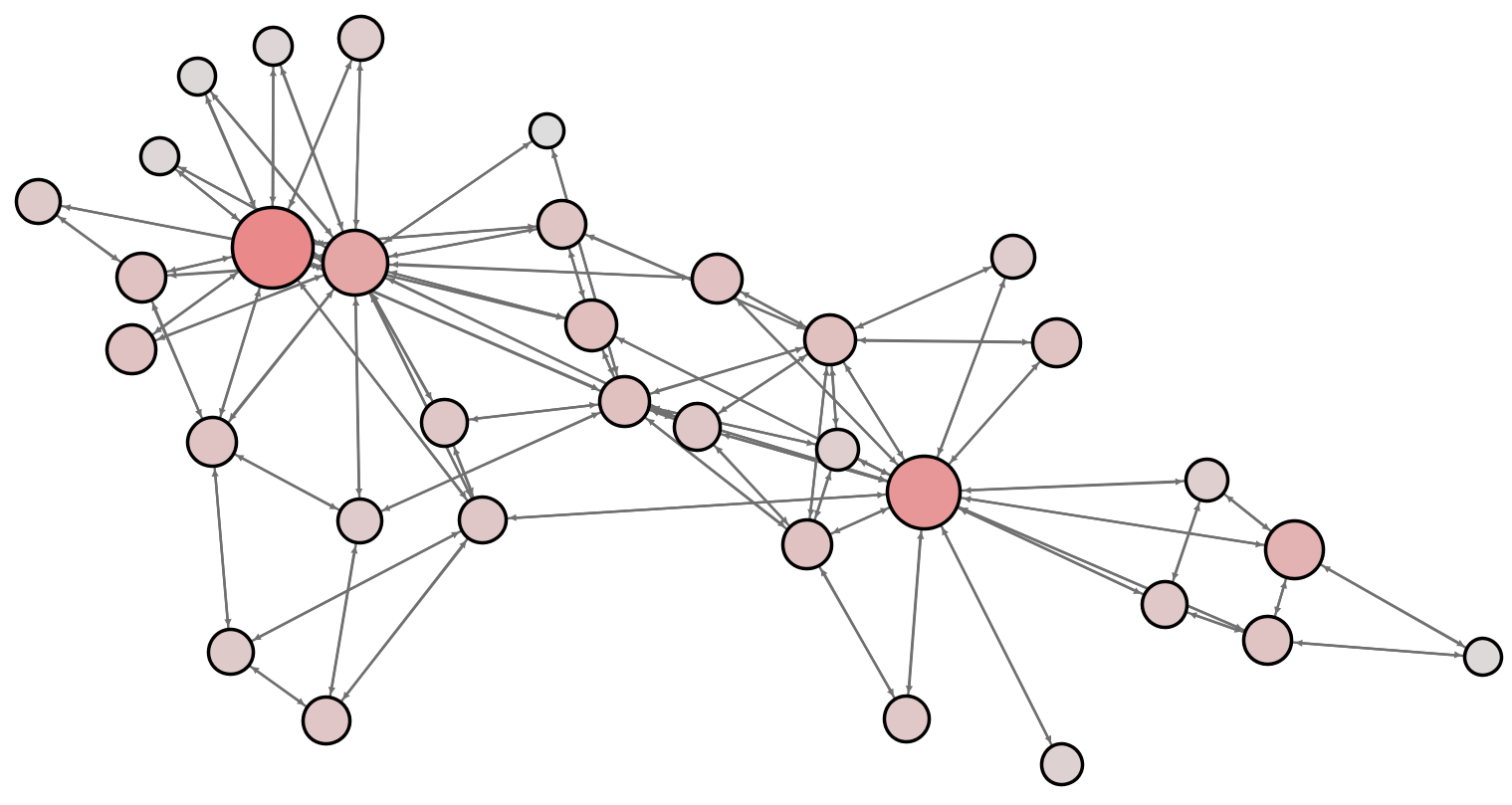}
&
\includegraphics[width=0.5\textwidth]{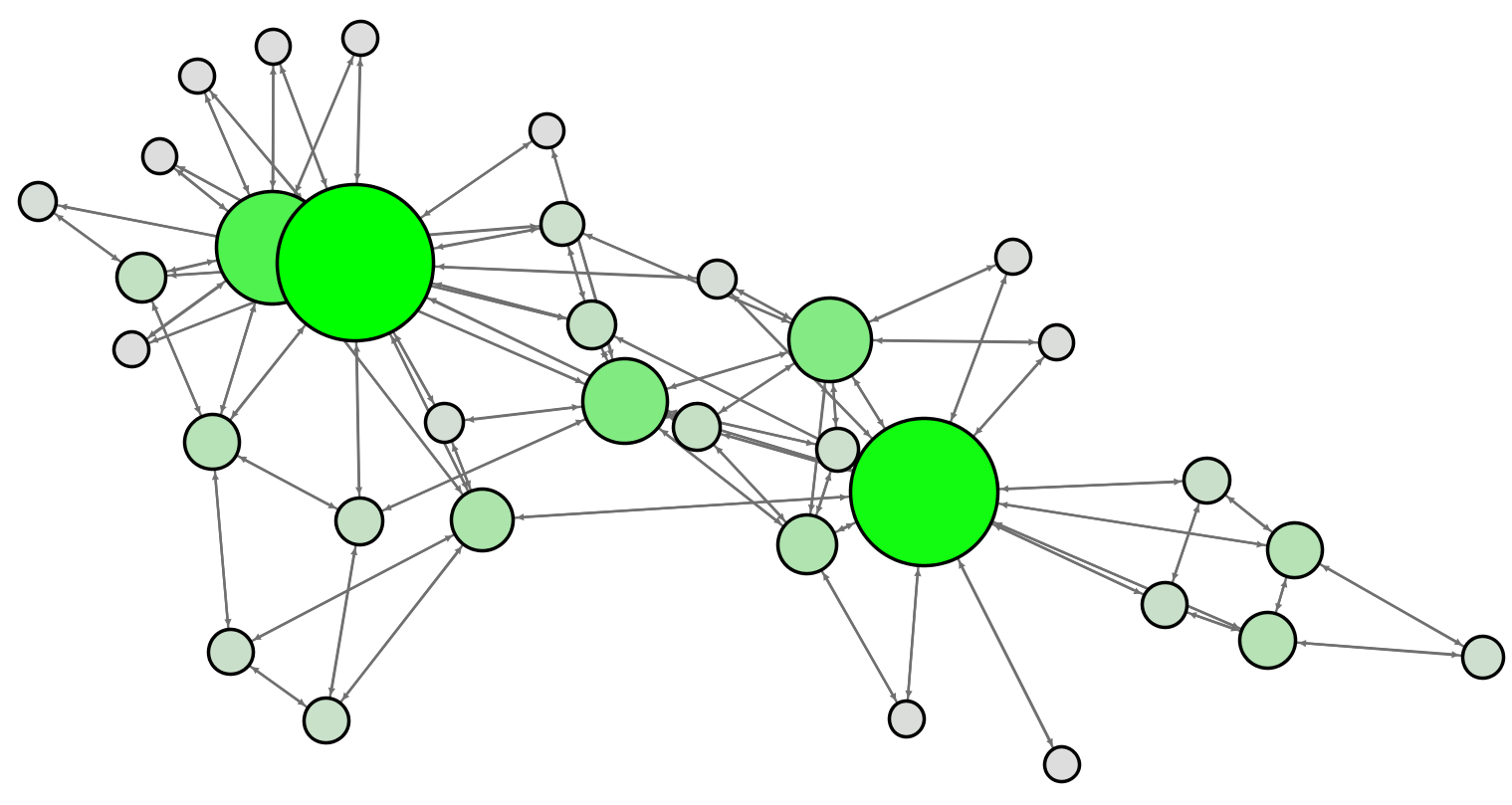}
\\
(a) values of $s_i$
&
(b) values of $r_i$
\end{tabular}
\vspace{-5mm}
\caption{
Nodes' relative values of $s_i$ and $r_i$}
\label{fig:Ri_Si}
\end{figure}

\begin{figure}[h!]
\centering
\begin{tabular}{cc}
\includegraphics[width=0.5\textwidth]{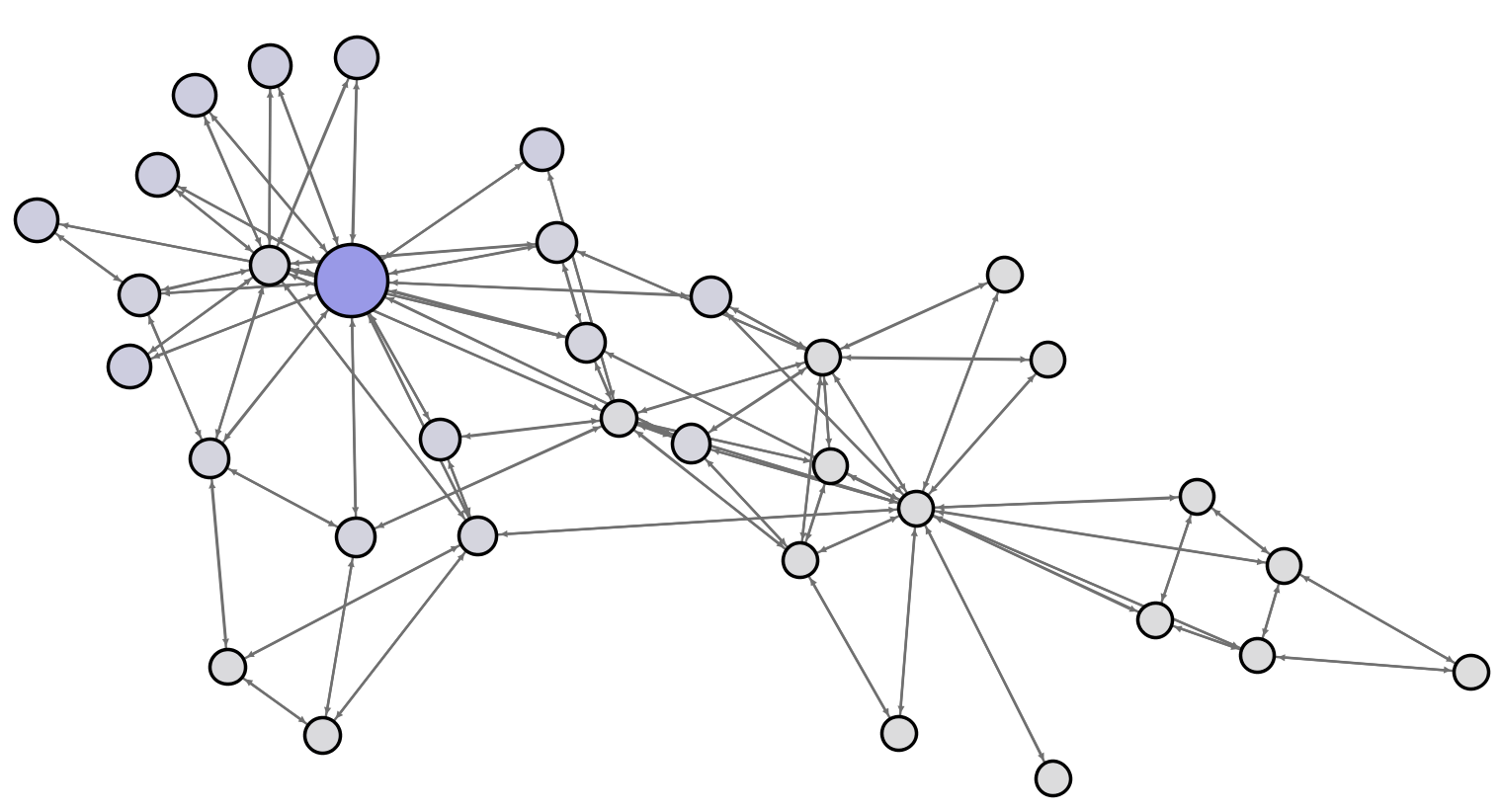}
&
\includegraphics[width=0.5\textwidth]{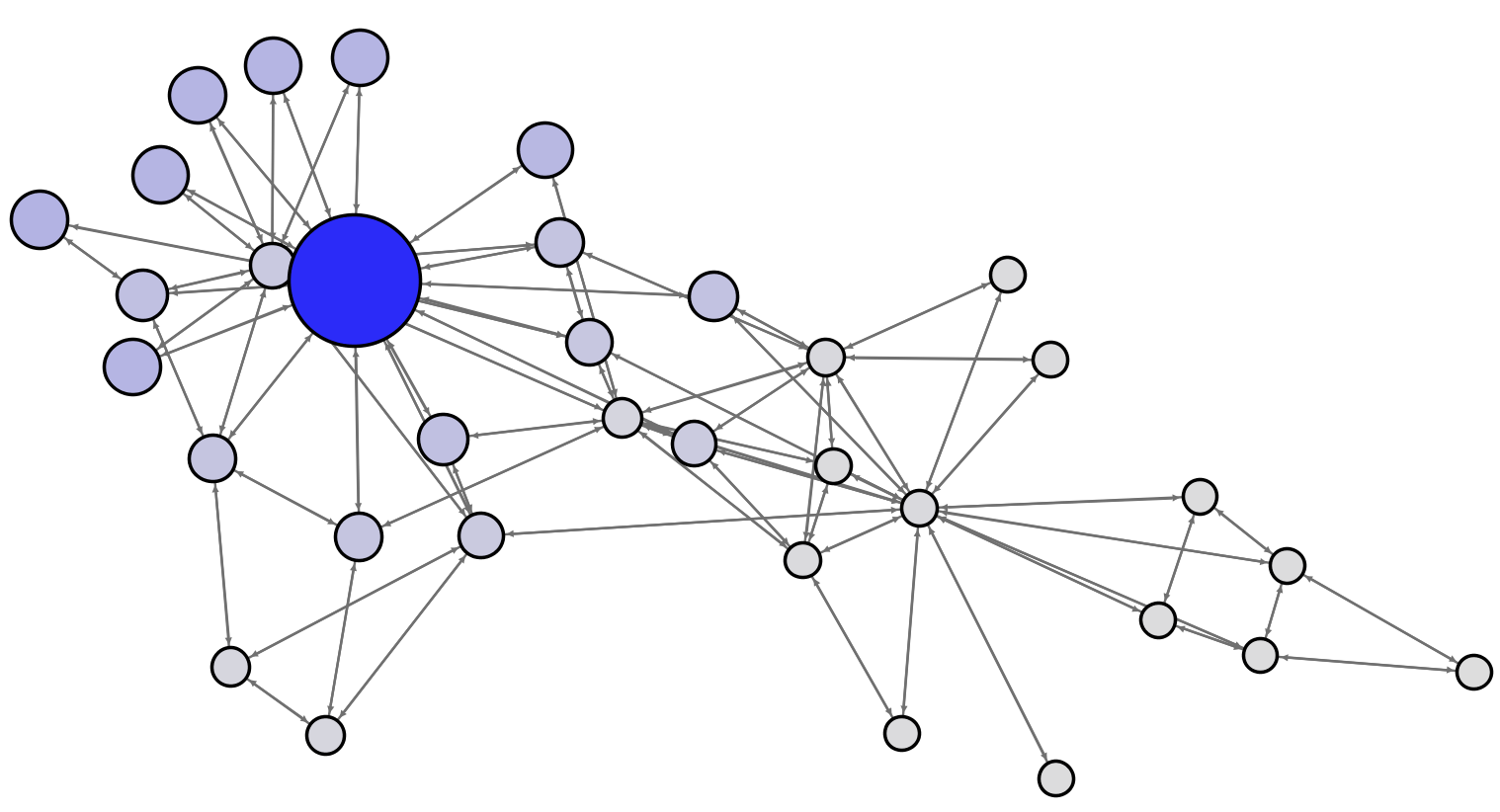}
\\
(a) at the end of first phase
&
(b) at the end of second phase
\end{tabular}
\vspace{-4mm}
\caption{
Illustration of phasewise progression of opinion values when $w_{ii}^0= 0.5$ (Karate) with $k_g=5$ ($z_i^0=0, \forall i \in V$)}
\label{fig:phasewise_0.5}
\end{figure}

\begin{figure}[h!]
\centering
\begin{tabular}{cc}
\includegraphics[width=0.5\textwidth]{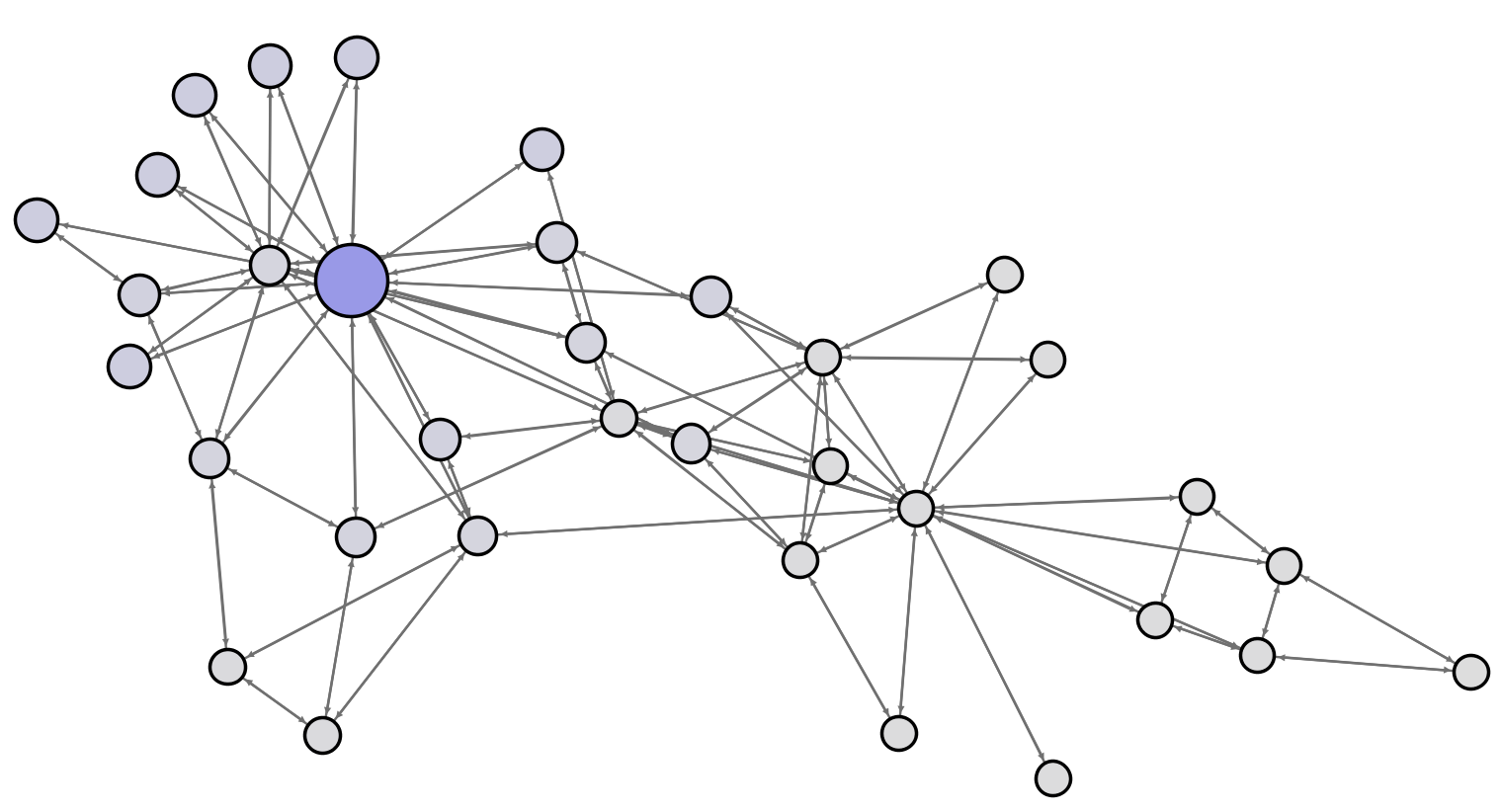}
&
\includegraphics[width=0.5\textwidth]{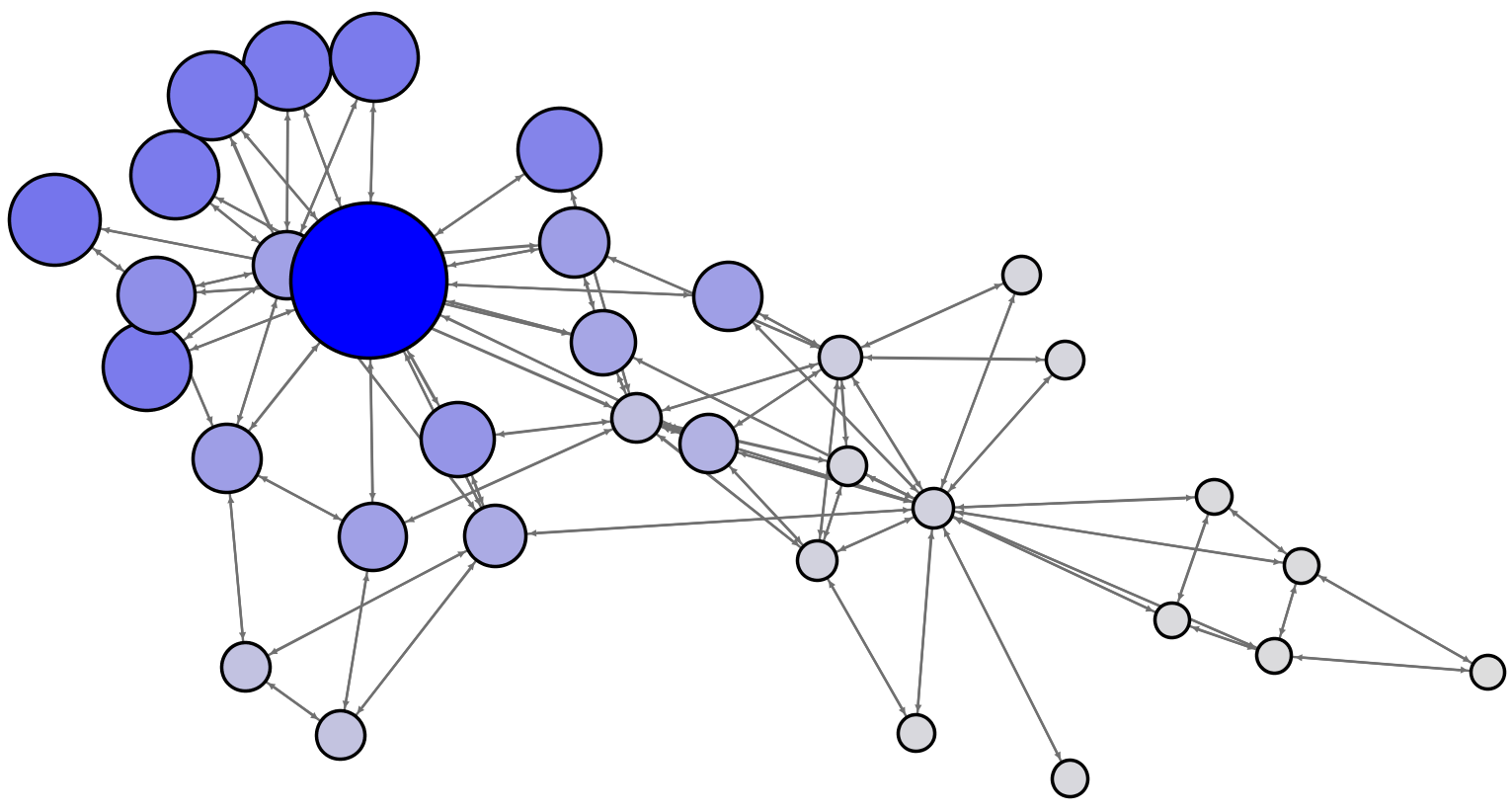} 
\\
(a) at the end of first phase
&
(b) at the end of second phase
\end{tabular}
\vspace{-4mm}
\caption{
Illustration of phasewise progression of opinion values when $w_{ii}^0= 0.9$ (Karate) with $k_g=5$ ($z_i^0=0, \forall i \in V$)}
\label{fig:phasewise_0.9}
\end{figure}

In Figures \ref{fig:phasewise_0.5} and \ref{fig:phasewise_0.9},
the size and color saturation of a node $i$ represent its opinion value  (bigger size and higher saturation implies higher opinion value).
It can be seen that a higher value of $w_{ii}^0$ results in a significant change in opinion values in the second phase.
This is owing to the fact that the good camp's investment is more effective in the second phase when nodes attribute higher weightage to their initial biases in the second phase, or equivalently, their opinions at the end of the first phase (Equation~(\ref{eqn:WonV})) (assuming positive opinion values which is the case here).

\subsubsection{Other Possible Approaches}

Since our approach requires $n^2$ iterations so as to search over all pairs of nodes, we explore other possible approaches which could be used for determining a way of splitting the total available budget across the two phases as well as the nodes to be invested on in the two phases.

\begin{figure}[h!]
\centering
\includegraphics[width=0.6\textwidth]{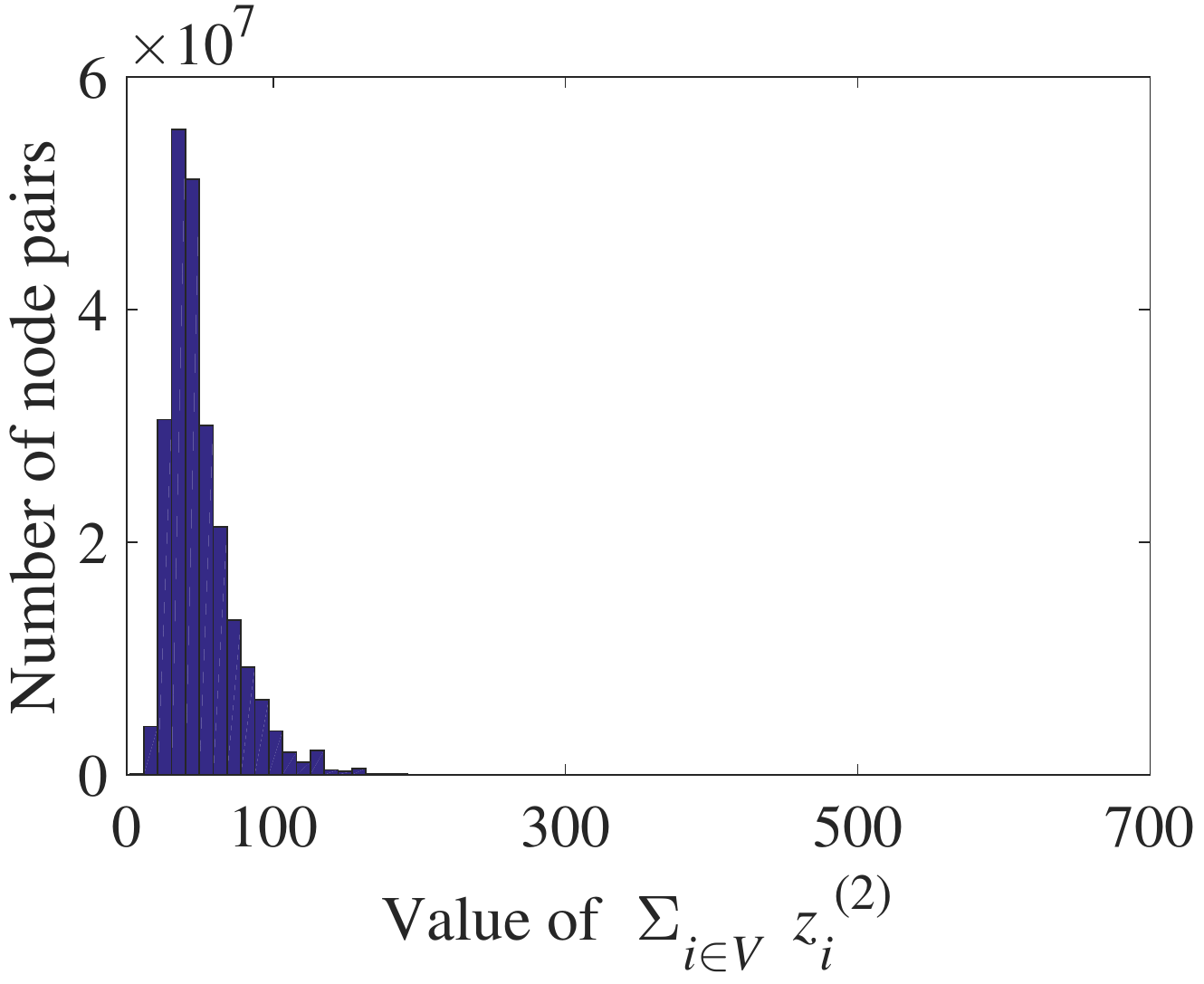}
\vspace{-3mm}
\caption{
Histogram showing the number of node pairs resulting in different values of $\sum_{i\in V} z_i^{(2)}$ for
 $w_{ii}^0 = 0.5$ for NetHEPT dataset with $k_g=100$ ($z_i^0=0, \forall i \in V$)}
\label{fig:V2_hist_pairs}
\vspace{-2mm}
\end{figure}

\begin{figure}[h!]
\centering
\includegraphics[width=0.6\textwidth]{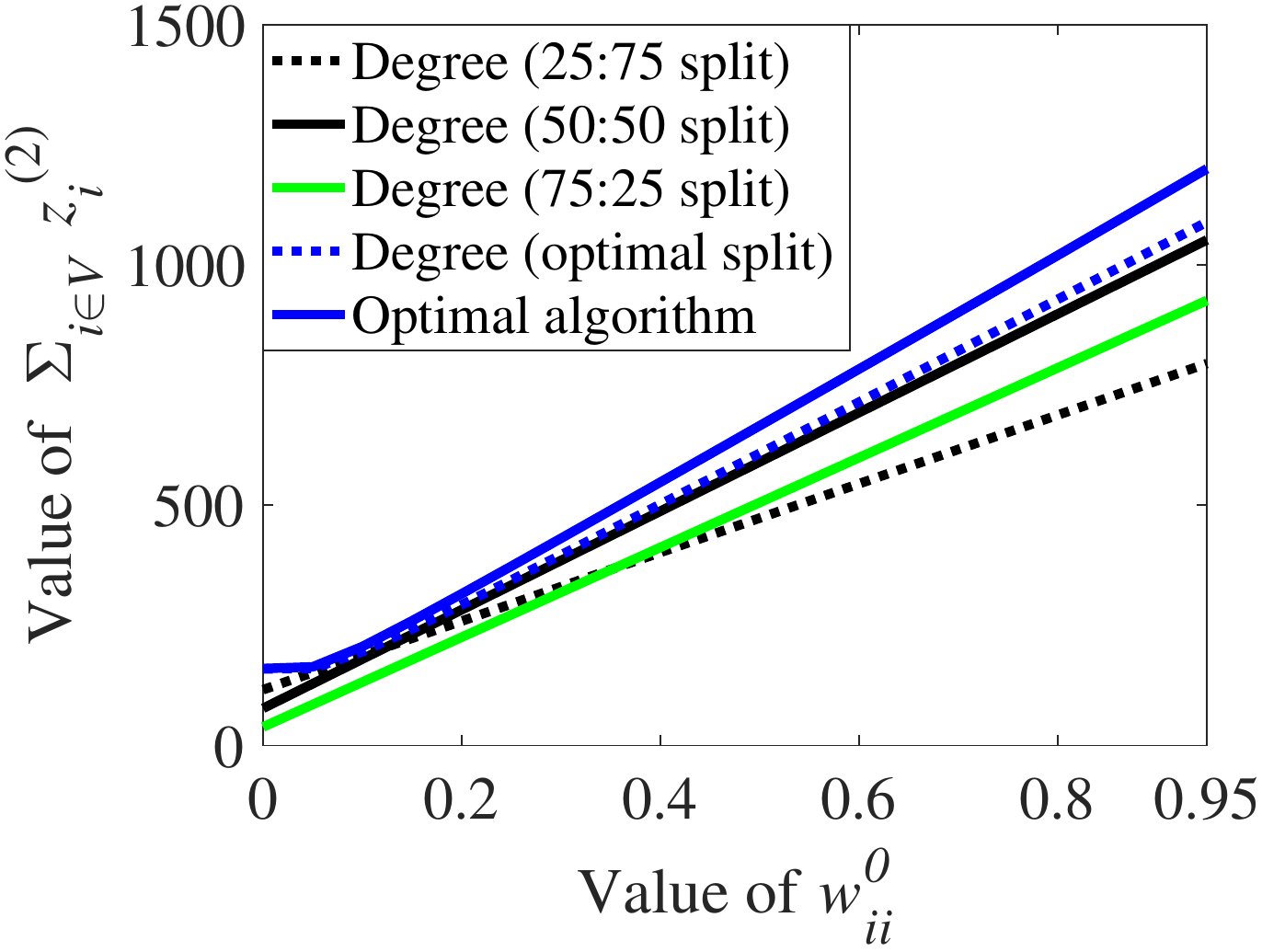}
\vspace{-3mm}
\caption{
Performance of high degree heuristic with different budget splits versus the optimal algorithm
 (NetHEPT) with $k_g=100$  ($z_i^0=0, \forall i \in V$)}
\label{fig:all_algos}
\vspace{-2mm}
\end{figure}

We first study how a random algorithm would perform. In particular, we study  the distribution of the values of $\sum_{i\in V} z_i^{(2)}$ over all pairs of nodes, assuming that the budget split for a given pair is obtained using Expressions (\ref{eqn:optk1_new}) and (\ref{eqn:optk2_new}).
Figure \ref{fig:V2_hist_pairs}
shows the histogram for the NetHEPT dataset with $k_g=100$, considering $w_{ii}^0 = 0.5$ (the histograms for other values of $w_{ii}^0$ were qualitatively similar).
While the value of $\sum_{i\in V} z_i^{(2)}$ corresponding to the optimal node pair is  $666.02$, it can be seen that most pairs resulted in value less than $100$.
Hence, it is clear that a random node selection algorithm would perform quite badly, even with our approach of optimally splitting the budget across the two phases.

However, instead of selecting nodes randomly, certain node centrality measures could be used in order to find the important nodes which can be invested on.
The budget could be split over the two phases, either in a manually chosen ratio, or using Expressions (\ref{eqn:optk1_new}) and (\ref{eqn:optk2_new}) corresponding to the chosen nodes.
Figure~\ref{fig:all_algos} presents the results of the high degree heuristic, where the highest degree node is chosen to be invested on in both the phases. The different ways of splitting the budget are, hence, compared with our optimal approach.
A $25$:$75$ split ($k_g^{(1)}=25, k_g^{(2)}=75$) performs well for low values of $w_{ii}^0$, since as explained earlier, a low $w_{ii}^0$ necessitates a higher investment in the second phase. On the other hand, a $50$:$50$ split performs well for moderate and high values of $w_{ii}^0$, since the optimal budget split for this node pair obtained using (\ref{eqn:optk1_new}) and (\ref{eqn:optk2_new}) was close to $50$:$50$.
For the same reason, the difference in the performances of the $50$:$50$ split and the optimal split is not very significant.

While the node selection approaches based on node centrality measures are likely to perform well in practice, we explore another greedy approach which explicitly accounts for our objective function (\ref{eqn:twophase_onecamp_1})
while selecting the nodes in the two phases.
Specifically, while choosing the node for the first phase, we ignore any investment that would occur in the second phase. That is, in Expression~(\ref{eqn:twophase_onecamp_1}), we assume $k_g^{(2)}=0$. So 
the node to be chosen for the first phase is the node $\alpha$ with the highest value of $\theta_{\alpha}(1+c_{\alpha}) s_{\alpha}$.
Now assuming this node as the one that would be invested on in the first phase, we iterate over the $n$ nodes to be invested on in the second phase (while determining the budget split as per (\ref{eqn:optk1_new}) and (\ref{eqn:optk2_new})), so that the value of $\sum_{i\in V} z_i^{(2)}$ is maximized.
For the studied NetHEPT dataset, the nodes chosen for the two phases using this greedy approach turn out to be the highest degree node. Hence, this approach performs equal to the high degree heuristic with the optimal split (Figure~\ref{fig:all_algos}) for the NetHEPT dataset.
However, in general, the greedy approach is a promising one, since it partially accounts for the objective function, while reducing the number of iterations from $n^2$ to $2n$.

\subsection{Simulation Results: The Case of Competing Camps}

Though we presented a polynomial time algorithm for determining Nash equilibrium when there are two camps, which is of theoretical interest, it is computationally expensive to run it on larger networks.
Hence, for the purpose of studying the  case of competing camps,
we consider  the  Zachary's Karate club dataset (34 nodes, 78 edges) \cite{zachary1977information}.

Note that
for $z_i^0=0$, the two-phase investment game turns out to be symmetric, since both camps would have the same effectiveness in the first phase, that is, $w_{ig}^{(1)} = w_{ib}^{(1)}, \forall i \in V$.
The other parameters such as $w_{ij}$ and $w_{ii}^0$, and hence $s_i$ and $r_i$, are common to both the camps. 
So, the Nash equilibrium strategy played by the camps is symmetric.
Owing to the game being zero-sum,
 this results in both camps receiving zero utility, that is, $\sum_{i\in V} z_i^{(2)} = 0$.

\subsubsection{The effect of $w_{ii}^0$}

Since $z_i^0=0$ results in trivial observations, we study the effect of  $w_{ii}^0$, while first considering $z_i^0=+0.1$.
In general, we observed that high values of $w_{ii}^0$ resulted in a pure strategy Nash equilibrium in which the camps invested their entire budget on the prime node mentioned in Section \ref{sec:phasewise}.
For low and intermediate values of $w_{ii}^0$, we computed mixed strategy Nash equilibrium, describing the probability with which the good camp would invest on the pair $(\alpha,\beta)$ (node $\alpha$ in the first phase and node $\beta$ in the second phase) and the probability with which the bad camp would invest on the pair $(\gamma,\delta)$, with the corresponding saddle point budget splits as derived in Section \ref{sec:dep_2camps}.
Hence, in order to study the budget allotted to the first phase for a given value of $w_{ii}^0$, we compute the expectation (weighted average where the weights correspond to the probabilities) of the first phase investments corresponding to the aforementioned pairs of nodes.

Formally, in mixed strategy Nash equilibrium, let the good camp play its strategy $(\alpha,\beta)$ with probability $\mathbb{P}_g(\alpha,\beta)$ and the bad camp play $(\gamma,\delta)$ with probability $\mathbb{P}_b(\gamma,\delta)$. Let the corresponding saddle point suggest the good camp to invest $k_g^{(1)} ((\alpha,\beta),(\gamma,\delta))$ on node $\alpha$, and the bad camp to invest $k_b^{(1)} ((\alpha,\beta),(\gamma,\delta))$ on node $\gamma$, in the first phase.
So the expected (weighted average) first phase investment by the good camp is $\sum_{((\alpha,\beta),(\gamma,\delta))} \mathbb{P}_g(\alpha,\beta) \cdot \mathbb{P}_b(\gamma,\delta) \cdot k_g^{(1)} ((\alpha,\beta),(\gamma,\delta))$
and that by the bad camp is $\sum_{((\alpha,\beta),(\gamma,\delta))} \mathbb{P}_g(\alpha,\beta) \cdot \mathbb{P}_b(\gamma,\delta) \cdot k_b^{(1)} ((\alpha,\beta),(\gamma,\delta))$.

For various values of $w_{ii}^0$ in general, the camps invested significantly  on the prime node (mentioned in Section \ref{sec:phasewise}) with high probability. Investments were also made with considerable probability on the node with the highest value of $s_i$ in the first phase and the second highest value of $r_i$ in the second phase, in Figure \ref{fig:Ri_Si}. Certain other nodes with good enough values of $s_i$ and $r_i$ were invested on with low non-zero probabilities.

\begin{figure}[h!]
\centering
\includegraphics[width=0.6\textwidth]{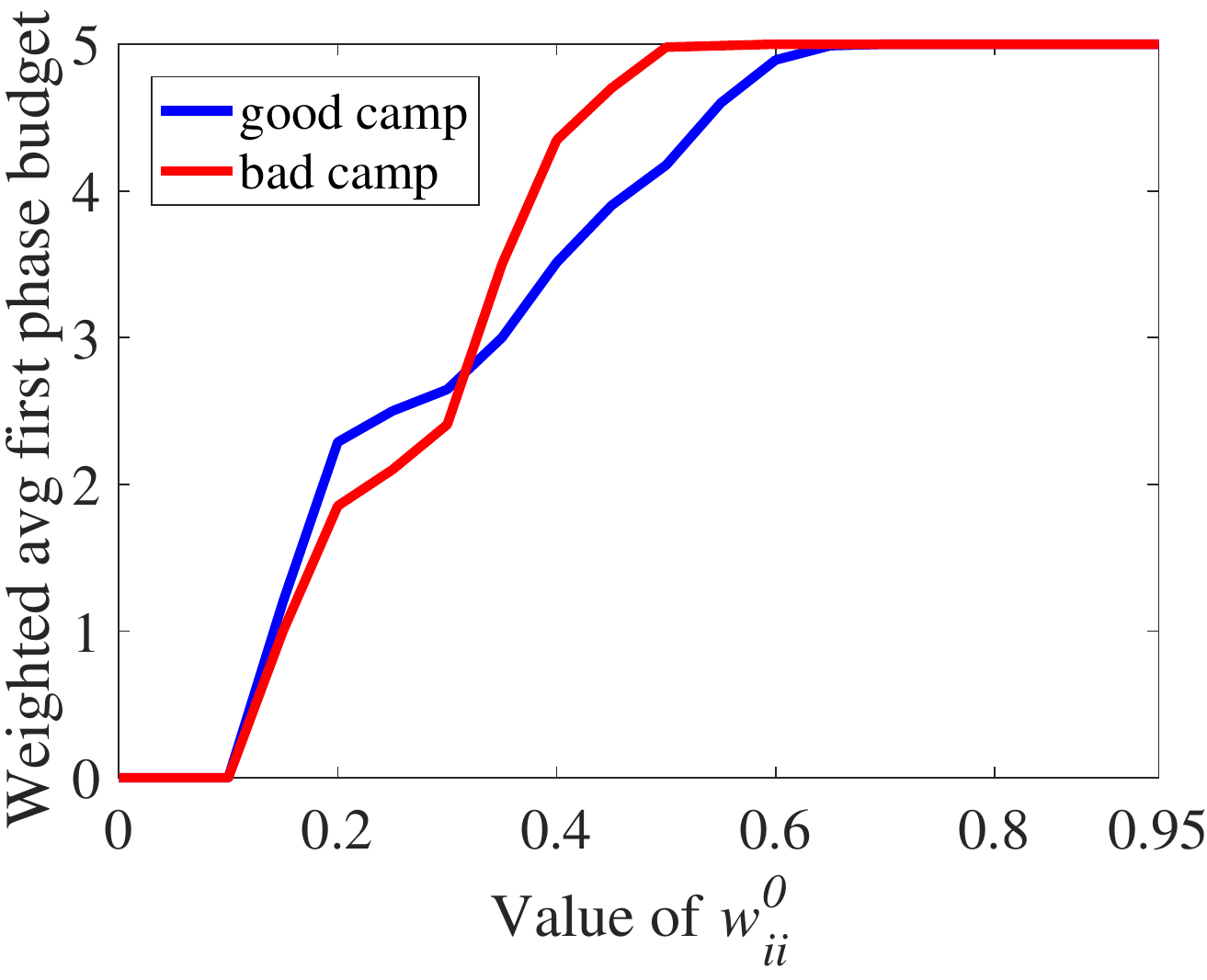}
\vspace{-3mm}
\caption{
The effect of
 $w_{ii}^0$ (Karate) with $k_g=k_b=5$ ($z_i^0=+0.1, \forall i \in V$)}
\label{fig:MSNE_dependency}
\end{figure}

Figure \ref{fig:MSNE_dependency} presents the weighted average (expected) investments in the first phase by the two camps in the mixed strategy Nash equilibrium, corresponding to different values of $w_{ii}^0$.
Similar to the single camp case, the trend is similar in that, the budget allotted to the first phase increases with $w_{ii}^0$. A high $w_{ii}^0$ encourages the camps to influence the initial biases for the second phase, which not only plays a key role in determining the final opinion owing to the high weightage attributed to the biases in the second phase (which are same as the opinions at the end of the first phase), but also enhances the effectiveness of the camps' investments in the second phase.

\subsubsection{The effect of $z_i^0$}

Recall from Section \ref{sec:dep_2camps} that $u_g ((\alpha,\beta),(\gamma,\delta))$ is the good camp's utility when it plays the pure strategy $(\alpha,\beta)$ and the bad camp plays $(\gamma,\delta)$ with the corresponding saddle point budget splits.
Following our above discussion on the probabilities in mixed strategy Nash equilibrium, the value of $\sum_{i\in V} z_i^{(2)}$, quantifying the expected utility of the good camp in mixed strategy Nash equilibrium, is 
$\sum_{((\alpha,\beta),(\gamma,\delta))} \mathbb{P}_g(\alpha,\beta) \cdot \mathbb{P}_b(\gamma,\delta) \cdot u_g ((\alpha,\beta),(\gamma,\delta))$.

\begin{figure}[h!]
\centering
\includegraphics[width=0.6\textwidth]{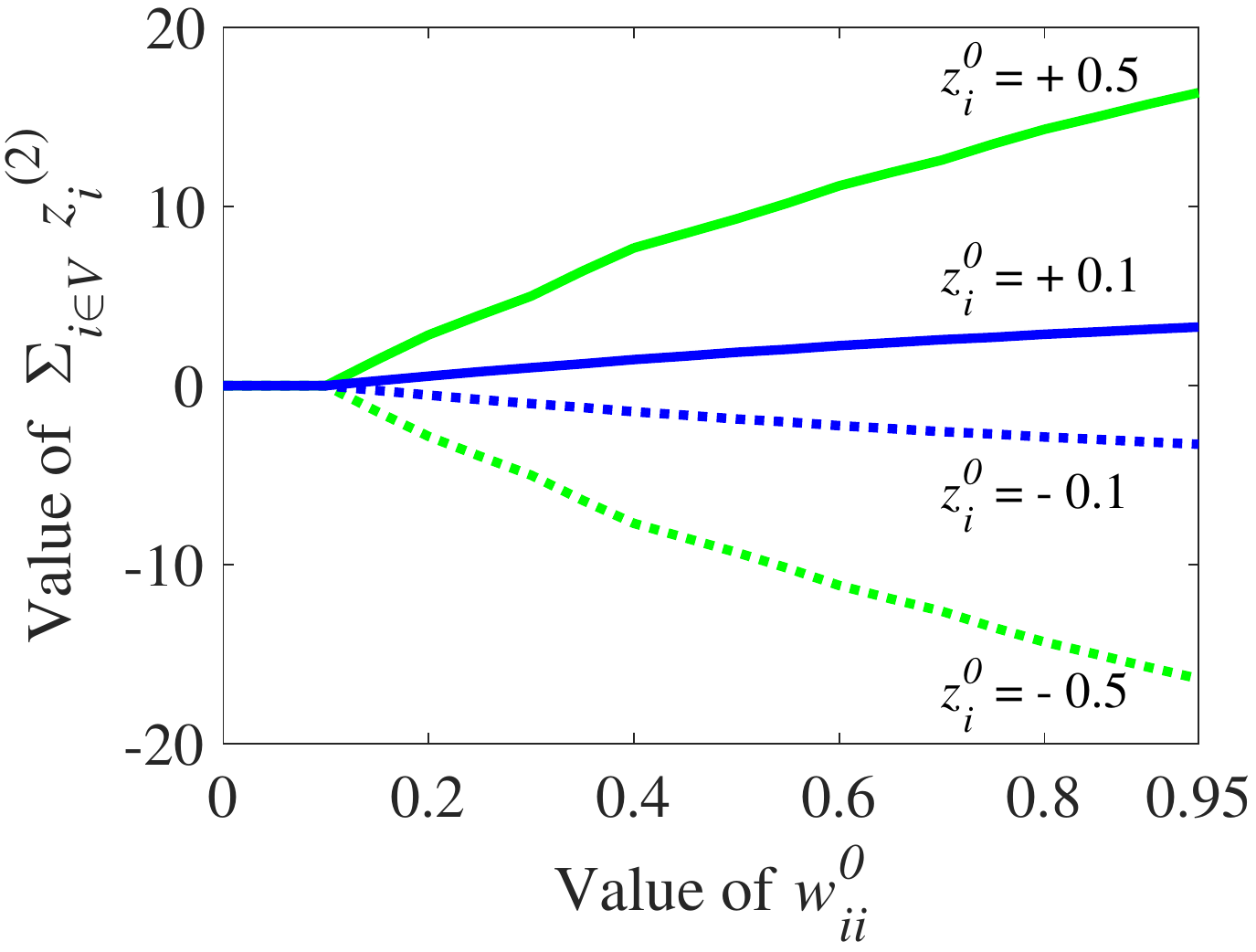}
\vspace{-3mm}
\caption{
The effect of $w_{ii}^0$ 
for different values of $z_{i}^0$ (Karate) with $k_g=k_b=5$ }
\label{fig:different_V0} 
\end{figure}

\begin{figure}[h!]
\vspace{-2mm}
\centering
\includegraphics[width=0.6\textwidth]{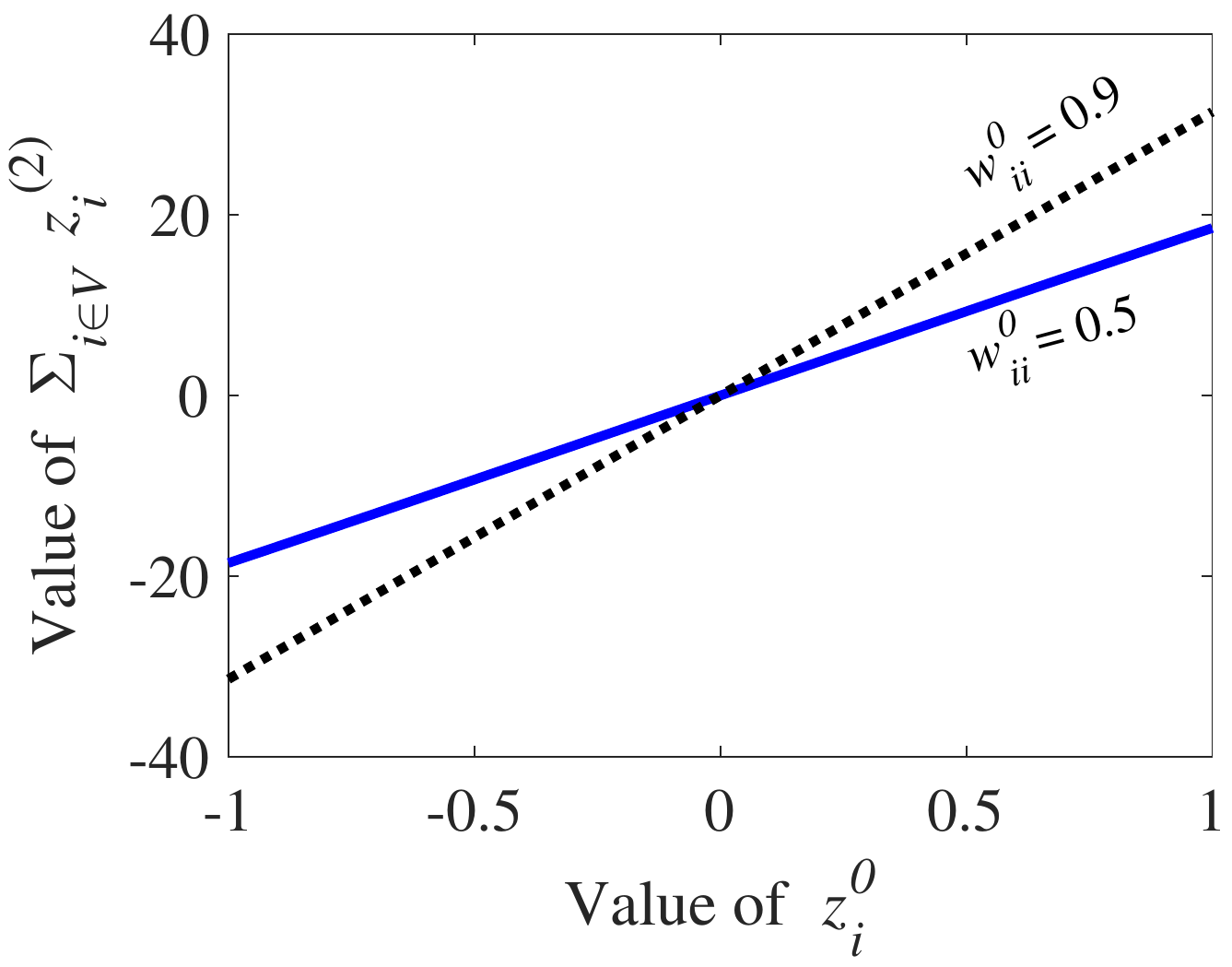}
\vspace{-3mm}
\caption{
The effect of $z_{i}^0$ 
for different values of $w_{ii}^0$ (Karate) with $k_g=k_b=5$ }
\label{fig:with_V0}
\end{figure}

Figures \ref{fig:different_V0} and \ref{fig:with_V0} present the effect of $z_i^0$ in conjunction with $w_{ii}^0$, on  the value of $\sum_{i\in V} z_i^{(2)}$.
We  observed that a higher value of initial bias in the first phase ($z_i^0$) results in a better utility for the good camp (and hence worse utility for the bad camp). This is not only because of a head start offered to the good camp, but also because of its investment being more effective.
This is further magnified for higher $w_{ii}^0$, that is, if nodes attribute higher weightage to their biases (similar to Section \ref{sec:phasewise} where a higher $w_{ii}^0$  results in a more significant change in opinion values).

It is also noteworthy to see that, if we change the sign of $z_i^0$ without changing its magnitude, the value of $\sum_{i\in V} z_i^{(2)}$ also changes its sign while maintaining its magnitude.
This is owing to the fact that, changing the sign of $z_i^0$ interchanges the two camps' roles, that is, the values of $w_{ig}^{(1)}$ and $w_{ib}^{(1)}$ (the effectiveness of their investments in the first phase) get interchanged 
(while other parameters such as $w_{ij}$ and $w_{ii}^0$ are common to both camps).

\subsubsection{The effect of a camp deviating from  Nash equilibrium strategy}

We study  the loss incurred by a camp if it deviates from its Nash equilibrium strategy, to:
(a) a myopic strategy (investing by perceiving its utility as $\sum_{i\in V} z_i^{(1)}$), or (b) the farsighted single camp strategy (investing by perceiving its utility  as $\sum_{i\in V} z_i^{(2)}$ but ignoring the presence of the competing  camp).
In our simulations, we consider  the good camp deviating from its equilibrium strategy (while the bad camp played its equilibrium strategy).

\begin{figure}[h!]
\centering
\includegraphics[width=0.6\textwidth]{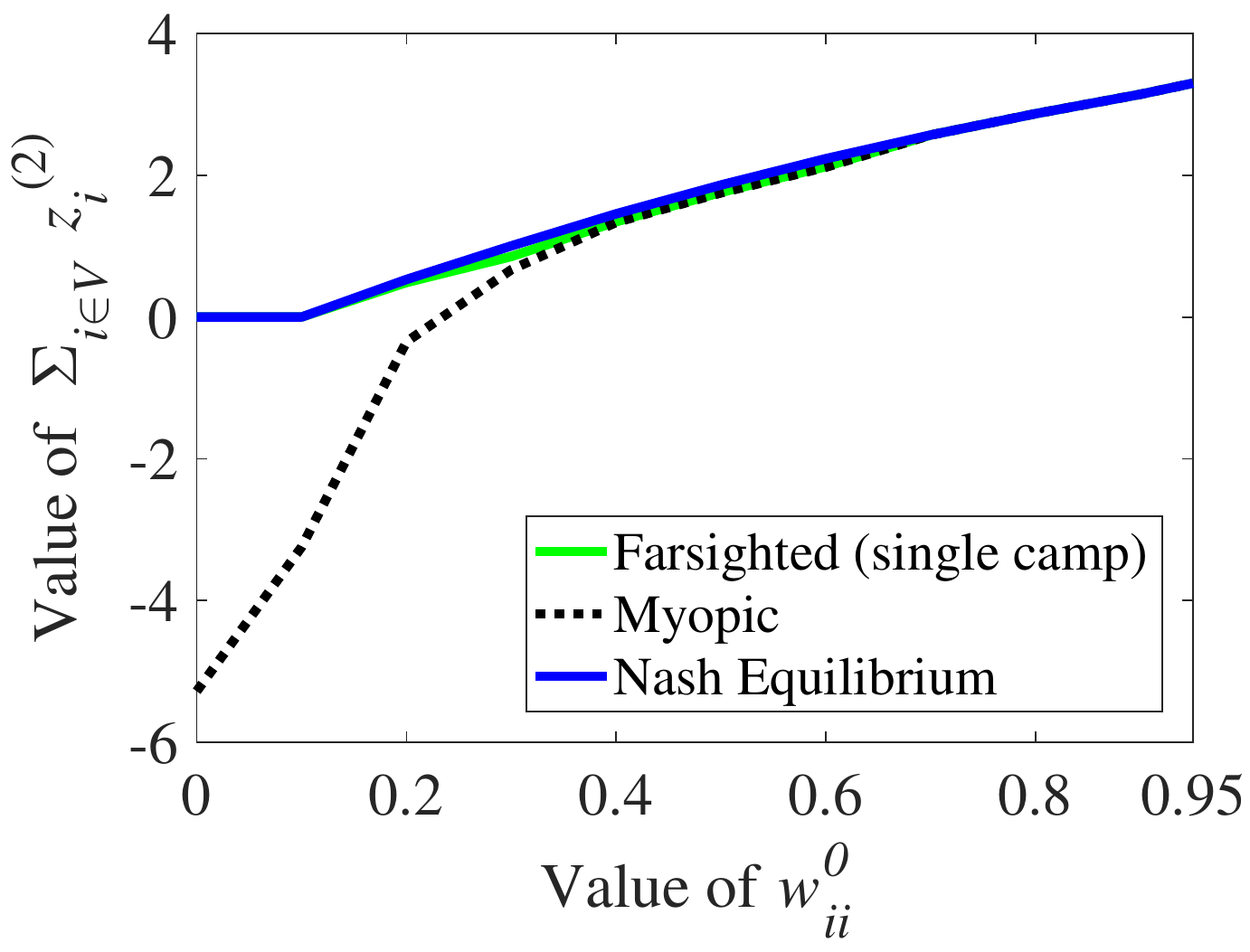}
\vspace{-3mm}
\caption{
The effect of the good camp deviating from its Nash equilibrium strategy
 for Karate dataset with $k_g=k_b=5$ ($z_i^0 = +0.1,\forall i\in V$)}
\label{fig:MSNE_deviate}
\end{figure}

Note that if the good camp decides to play myopic strategy, it perceives its utility as per Equation (\ref{eqn:sumatP_1}), where its optimal strategy  is independent of the strategy played by the bad camp.
In Figure \ref{fig:MSNE_deviate},
we observe that the loss incurred by playing myopic strategy is significant for low values of $w_{ii}^0$, in which range, it is actually optimal to invest most of the budget in the second phase. 
There is no loss incurred for high values of $w_{ii}^0$, since in this range, its equilibrium strategy is to invest the entire budget in the first phase (Figure \ref{fig:MSNE_dependency}), which also coincides with the myopic strategy (the nodes selected for investing on in both strategies are the same for high $w_{ii}^0$).

The loss incurred by playing the single camp farsighted strategy is observed to be relatively insignificant (the values of $\sum_{i\in V} z_i^{(2)}$ corresponding to it almost coincide  with the equilibrium values in Figure~\ref{fig:MSNE_deviate}). However, it is likely to be a property of the small-sized Karate club network, rather than a generalizable observation.
It would be interesting to conduct a study on larger networks, for which
more efficient algorithms need to be developed to find
 Nash equilibrium.

\vspace{-1mm}
\section{Conclusion
}

Using Friedkin-Johnsen model of opinion dynamics, we proposed a framework for two-phase investment 
on nodes in a social network,
where a node's final opinion in the first phase acts as its   bias in the second phase,
and the effectiveness of a camp's investment on that node in the second phase
depends on this bias. 
We formulated a two-phase investment game, where the camps' utilities involved a  parameter which can be interpreted as two-phase Katz centrality.

For the case when there is  one investing camp, we derived polynomial time algorithm for determining an optimal way to split the budget  between the two  phases. 
We observed a natural tradeoff, since
a lower investment in the first phase results in worse initial biases for the second phase, while a higher investment in the first phase spares a lower available budget for the second phase.
Our simulations quantified the impact of the weightage  that nodes attribute to their  biases.
A high weightage  necessitated high investment in the first phase,
so as to effectively influence the biases to be harnessed in the second phase.
We also showed the loss incurred by a camp when it uses a myopic strategy instead of the derived optimal one, thus highlighting the importance of an optimal budget split.
We also illustrated the phasewise progression of opinion values, and hence observed the significant change in opinion values in the second phase when nodes attribute high weightage to their biases.
In order to circumvent iterating over all pairs of nodes for finding an optimal solution, we studied other possible approaches and compared their performance with our optimal approach.

For the case when there are two competing camps, we showed  existence of Nash equilibrium  and
its polynomial time computability under reasonable  assumptions ($w_{ij}  \geq  0, \forall (i,j)$ and $w_{ii}^0  \geq  0,\theta_i  \geq  0, z_i^0  \in  [-1, 1], \forall {i \in V}$).
Using simulations, we computed the mixed strategy Nash equilibrium investments of the camps, and hence observed the  expectation of the first phase investments with respect to the weightage attributed by nodes to their biases.
A similar trend was seen as the single camp case, that a high weightage  necessitates high investment in the first phase.
We also observed that a higher value of initial bias in the first phase results in a better utility for the good camp (and hence worse utility for the bad camp), not only because of a head start but also because of its investment being more effective.
This is further magnified if nodes attribute higher weightage to their biases.
We concluded by showing the loss incurred by a camp if it deviated from its Nash equilibrium strategy.

\subsection*{Future Work}

This work has several interesting directions for future work, of which we mention a few.
It would be interesting to study the problem with bounds on investment on each node by the two camps (such as $\forall i \in V,\, x_i+y_i \leq 1$).
It is worth exploring whether there exist more efficient algorithms which can circumvent iterating over all pairs of nodes, while giving an approximation guarantee to the optimal solution and Nash equilibrium derived in this paper.
This would enable a more extensive study on larger datasets and see if any possible additional insights could be obtained.
%
%
For the competitive case, it would also be interesting to analyze solution concepts other than Nash equilibrium, such as correlated equilibrium.
 
The two-phase study can be generalized to multiple phases to see if any additional insights or benefits can be obtained.
We studied such a problem in our another work \cite{dhamal2018optmulti}, in which we assumed a camp's influence on a node to be independent of the node's bias.
However, with this assumption relaxed in our current work, the analysis for more than two phases complicates considerably. This work could act as the foundation for  a generalized study.

This paper studied a setting where the two camps were represented by positive and negative opinions values. However, in general, we would need to have the camps hold opinions in a multidimensional plane rather than on the real number line (e.g., \cite{dhamal2019integrated}), especially when there are more than two camps.
We explain one way in which this could be done.
Let each camp (say $h$)  have a vector associated with its opinion (say $\vec{v}_h$). Let its investment on node $i$ be denoted by $q_{hi}$ and the weightage attributed by  $i$ to the camp's opinion be $w_{ih}$.
One would need to derive the expression for  the vector-sum of nodes' opinions at the end of the second phase, $\sum_{i\in V} \vec{z}_i^{(2)}$.
 A camp's objective would be
to drive this vector-sum towards the direction of its own opinion vector, that is,
to maximize the inner product between the vector-sum of nodes' opinions and its own opinion vector (that is, $\vec{v}_h \cdot \sum_{i \in V} \vec{z}_i^{(2)}$).
Our  paper can be seen as studying the special case where we have two camps: $h=g,b$ with $\vec{v}_g = +1, \vec{v}_b = -1$ and $q_{gi}=x_i, q_{bi}=y_i$.

This paper considered Friedkin-Johnsen model of opinion dynamics, since it accounts for nodes' biases which are fundamental to the studied problem (as they are responsible for the change in weightage attributed to the camps).
So, in principle, the studied problem can be extended to any model which accounts for nodes' biases in the process of opinion dynamics.
While the linearity of  Friedkin-Johnsen model enabled us to arrive at closed form expressions and analytically derive the solutions, it would be worth investigating if the analysis is tractable when considering other such models.
This work studied a setting where the camps could tune their investments at the end of the first phase, that is, after the opinion dynamics reaches convergence.
However, the two-phase setting can be extended to a dynamical system by allowing camps to tune their investments in each iteration of the opinion dynamics process, and the total amount invested depends on the time for which the investment is made.

Personalized PageRank \cite{haveliwala2002topic,jeh2003scaling} is likely to be of interest while relating it to external camp weights, since these weight vectors ($\mathbf{w_g}$ and $\mathbf{w_b}$) are personalized according to the camps. 
Also, Dynamic PageRank \cite{gleich2014dynamical} could be related to the changing external camp weights over different phases.
It is worth exploring if efficient and effective heuristics can be developed along these lines.
%
%
In our study, we considered that the weightage attributed by a node to a camp changes depending on how aligned the node's bias is towards the camp. One could also consider that the weightage attributed by a node to its neighbor changes depending on how aligned its bias is towards its neighbor's opinion. 
We did not consider this since the weights between nodes stabilize after developing over months and years of interactions, while the camps are relatively strangers to the nodes and the weightage attributed to them are seldom stable.
However, in case of such a study which considers graphs and weights that change over time, it would be useful to define dynamic two-phase Katz centrality (on similar lines as we defined two-phase Katz centrality and taking cue from dynamic Katz centrality \cite{grindrod2011communicability}) for determining the importance of the individual nodes in opinion dynamics in the different phases.

\section*{Acknowledgment}

This work is partly supported by CEFIPRA grant No. IFC/DST-Inria-2016-
01/448 ``Machine Learning for Network Analytics''.

\section*{References}

\begin{small}
\bibliography{ODSN_Multiphase_references}
\end{small}

\end{document}